\newtheorem{lemma}{Lemma}
\newtheorem{remark}{Remark}
\newtheorem{corollary}{Corollary}
\newtheorem{definition}{Definition}
\newtheorem{theorem}{Theorem}
\newcommand{\C}{\mathcal{C}}
\newcommand{\F}{\mathbb{F}}
\newcommand{\rank}{\mathrm{rank}}
\newcommand{\cF}{\mathcal{F}}
\newcommand{\cU}{\mathcal{U}}
\newcommand{\cE}{\mathcal{E}}
\newcommand{\cV}{\mathcal{V}}
\newcommand{\cP}{\mathcal{P}}
\newcommand{\bl}{\backslash}
\newcommand{\di}{\mathrm{diag}}
\newcommand{\cR}{\mathcal{R}}
\newcommand{\bO}{\mathbf{O}}
\newcommand{\co}{c}
\let\bbordermatrix\bordermatrix
\patchcmd{\bbordermatrix}{8.75}{4.75}{}{}
\patchcmd{\bbordermatrix}{\left(}{\left[}{}{}
\patchcmd{\bbordermatrix}{\right)}{\right]}{}{}
\begin{document}
\title{Lower Bounds on the Sub-Packetization of Optimal-Access MSR Codes for Multiple-Node Repair}
\author{Lewen Wang, \and Zihao Zhang, \and Sihuang Hu
		\thanks{
			The authors are with the Key Laboratory of Cryptologic Technology and Information Security, Ministry of Education, Shandong University, Qingdao, Shandong, 266237, China, 
            and School of Cyber Science and Technology, Shandong University, Qingdao, Shandong, 266237, China.
			S. Hu is also with Quan Cheng Laboratory, Jinan 250103, China.
		 Research partially funded by
		 National Key R\&D Program of China under Grant No. 2021YFA1001000, and
		 National Natural Science Foundation of China under Grant No. 12231014.
			Email: lewenwang3@gmail.com, zihaozhang@mail.sdu.edu.cn, husihuang@sdu.edu.cn
            (Corresponding author: Sihuang Hu.)
            }
            }
\date{}
\maketitle
\begin{abstract}
We establish lower bounds on the sub-packetization of optimal-access MSR codes in the context of multiple-node failures. 
These bounds generalize the tight bounds for single-node failure presented by Balaji et al. (IEEE Transactions on Information Theory, vol. 68, no. 10, 2022). 
Moreover, we utilize generating functions to provide a more refined analysis, further strengthening these bounds.
\end{abstract}

\section{Introduction}
Modern distributed storage systems (DSS) require both high reliability and low storage overhead, 
particularly in the face of frequent node failures. 
Erasure codes have become a fundamental tool in meeting these needs, compared to traditional replication techniques. 
Among these, Maximum Distance Separable (MDS) codes stand out, as they provide the highest reliability for a given storage overhead. 
A crucial challenge in DSS is the efficient repair of failed nodes. 

\subsection{Single-Node Repair}
Regenerating codes, introduced by Dimakis et al.~\cite{Dimakis2010}, 
were initially designed to address single-node repair and have demonstrated excellent performance in this context. 
Let \(\mathcal{C}\) be an \((n, k, \ell)\) array code over the finite field \(\mathbb{F}_q\). 
  A codeword \(C \in \mathcal{C}\) is represented as \(C = (C_1, \dots, C_n)^{T}\), 
where each codeword symbol \(C_i\) is a vector of the form \(C_i = (c_{i,1}, \dots, c_{i,\ell})\) over \(\F_q\) for \(i \in [n]\). 
The length $\ell$ of each codeword symbol is called the \emph{sub-packetization} of the code.
For an \((n, k, \ell)\) MDS array code, Dimakis  et al. established the minimum repair bandwidth 
\[
\frac{d\ell}{d - k + 1}
\]
for recovering a failed node $C_i$ from any other \(d\) helper nodes $C_j$'s, known as the cut-set bound.

\begin{definition}\label{def_MSR} 
An \((n,k,\ell)\) MDS array code \(\C\) over \(\F_q\) is said to be an \(((n,k,d),(\ell,\beta=\frac{\ell}{d-k+1}))\) \emph{Minimum Storage Regenerating (MSR)} code if for every \(i\in[n]\) , for all \(\cR\subseteq[n]\bl \{i\} \) of size \(d\), there exist  functions \(f^\cR_{i,j}:\F^\ell\rightarrow\F^\beta,j\in \cR\) and function \(g^{\cR}_{i}:\F^{d\beta}\rightarrow\F^{\ell}\), such that the code symbol \(C_i\), can be computed by \(g^\cR_{i}(f^\cR_{i,j}(C_j),j\in \cR)\).
\end{definition}

In distributed storage node repair, besides the repair bandwidth, the amount of data accessed is also crucial, as it directly affects the disk I/O overhead. The optimal-access property refers to the condition where the volume of data transmitted by a helper node is exactly the same as the amount of data accessed. MSR codes that satisfy this condition are called \emph{optimal-access} MSR codes.

Much progress has been made in deriving lower bounds on the sub-packetization $\ell$. 
For general MSR codes with repair degree \(d = n - 1\), Goparaju et al.~\cite{Goparaju2014} proved the lower bound \(\ell \geq \exp(\Theta(\sqrt{k}))\). 
This result was later improved to \(\ell \geq e^{\frac{k - 1}{4r}}\) by Alrabiah and Guruswami~\cite{Guruswami2019}, where \(r=n-k\). The most recent result, \(\ell \geq e^{\frac{(k - 1)(r - 1)}{2r^2}}\), was presented by Alrabiah and Guruswami~\cite{Guruswami2021} and Balaji et al.~\cite{balaji2022}.
For optimal-access MSR codes, Tamo et al.~\cite{tamo2014} demonstrated that \(\ell \geq r^{\frac{k - 1}{r}}\) with repair degree \(d = n - 1\). 
Balaji et al.~\cite{balaji2022} further established 
the lower bound \(\ell \geq \min\{s^{\lceil\frac{n-1}{s}\rceil}, s^{k-1}\}\) with $s=d-k+1$. 
This bound is tight as it is achieved by construction in~\cite{Sasidharan2016,Li2024,Ye2017,Kumar2023,Li2018,Myna2018}. 
In addition to the constructions mentioned above, numerous MSR code constructions have been proposed in recent years. 
Please refer to~\cite{Birenjith2015,Rashmi2011,Tamo2017,Tamo2013,Wang2016,Raviv2017,Duursma2023} and the excellent  survey~\cite{CIT-115} for further details.

\subsection{Multiple-Node Repair}\label{intro_multi}
In practical scenarios, repairing multiple nodes failures is a common situation.
For multi-node repair, two main models are commonly considered: the \emph{centralized} repair and the \emph{cooperative} repair. 
In the centralized repair model, a single data center downloads data from \(d\) helper nodes and uses this data to rebuild \(h\) failed nodes. In contrast, the cooperative repair model involves the failed nodes themselves collaborating with each other, acquiring data from helper nodes, and sharing information between them during the repair process. The repair bandwidth for these two models is calculated differently: in centralized repair, the bandwidth is determined by the total data downloaded from the helper nodes, while in cooperative repair, it includes both the data received from helper nodes and the data exchanged between the failed nodes. 
Please see~\cite{Cadambe2013,Wang2017,Zorgui2017,ye20173,Rawat2018,Zorgui2018} for results under centralized repair model and~\cite{Hu2010,Shum2013,Kermarrec2011,shum2011,Li2014,shum2016} for results under cooperative repair model. 

This paper specifically focuses on centralized MSR codes. Under this model, the cut-set bound is established by Cadambe et al. in~\cite{Cadambe2013}. 
As we generally consider linear repair, the formal definition of centralized MSR code is given as follows.
(We offer a formal proof of the cut-set bound in Appendix~\ref{proof_cutset}.)

\begin{definition}\label{def_MSR2} 
Let $\C$ be an \((n,k,\ell)\) MDS array code. Given two integers \(h,d\) with \(2 \le h \le r,\ k+1\le d \le n-h\), and a subset $\cF\subseteq[n]$ of size $h$, we say that \emph{$\C$ has  $(h,d)$ optimal repair schemes
for $\cF$} if for any \(\cR\subseteq[n]\bl \cF \) of size \(d\), there exist linear functions \(f^\cR_{\cF,j}:\F^l\rightarrow\F^\beta,j\in \cR\) and linear function \(g_{\cF}^{\cR}:\F^{d\beta}\rightarrow\F^{h\ell}\), 
such that the code symbols \(C_i\), \(i\in \cF\), can be computed by \(g_{\cF}^{\cR}(f^\cR_{\cF,j}(C_j):j\in \cR)\), where \(\beta=\frac{h\ell}{d-k+h}\).
In general, if $\C$ has $(h,d)$ optimal repair schemes for any  $\cF\subseteq[n]$ of size $h$, then we call $\C$ an \emph{$(h,d)$ centralized MSR code}.
An  \((h, d)\)-centralized MSR code with optimal-access property is called an optimal-access \((h, d)\)-centralized MSR code.
\end{definition}

As \( f^\cR_{\cF,j} \) and \(g^{\cR}_{\cF}\), \( j \in \cR \) are linear, there exist matrices \( S^{\cR}_{j \rightarrow \cF} \) of sizes \( \beta \times \ell \)  and \( M^{\cR}_{\cF}\) of size \(h\ell\times d\beta\) corresponding to \( f^\cR_{\cF,j} \)and \(g^{\cR}_{\cF}\), respectively. 
Specifically, the data transmitted from node \( C_j \), for the purpose of repairing the nodes in \( \cF \), is \( S^\cR_{j \rightarrow \cF} C_j \).
The matrices \( S^{\cR}_{j \rightarrow \cF} \), for \( j \in \cR \) are referred to as the \emph{repair matrices}, and the span of the rows of \( S^{\cR}_{j \rightarrow \cF} \), denoted by \(\langle S^{\cR}_{j \rightarrow \cF}  \rangle\), is called the \emph{repair subspace}.
If \(d = n - h\), the helper nodes necessarily comprises all remaining nodes, and we will drop the superscript \(\cR\) of  \(S^\cR_{j \rightarrow \cF}\) and \( M^{\cR}_{\cF}\).
If the repair matrices depend solely on the set of failed nodes \( \cF \), namely, \(S^{\cR}_{j \rightarrow \cF} = S_{\cF} \), for any \( \cR\subseteq[n]\bl{\cF}\) and \(j\in \cR\),  such  repair matrices are commonly referred to as  \emph{constant repair matrices}. 
If \( S^{\cR}_{j \rightarrow \cF} = S_{j \rightarrow \cF} \) for any \( \cR\subseteq[n]\bl{\cF}\), then the repair matrices are said to be \emph{independent of the choice of the remaining \((d-1)\) helper nodes}~\cite{balaji2022}. 
These two types of repair matrices play a crucial role in study of lower bounds on the sub-packetization.

Numerous constructions for both the centralized and cooperative repair models have been proposed in recent years. Please refer to~\cite{Wang2017,zorgui2019,Ye20172,li2023,zhang2024,Ye2019,Zhang2020,Ye2020,liu2023new,Zhangzh2024} for further reading.
However, to date, no lower bound has been established on the sub-packetization of MSR codes in the context of multiple-node repair. 

\subsection{Main Results}
In this paper, we establish the following novel lower bounds on the sub-packetization  of optimal-access MSR codes in the context of multiple-node failures. 
These bounds generalize the tight bounds for single-node failure presented by Balaji et al.~\cite{balaji2022}. 
\begin{theorem}\label{thm_constant1}
       Let \(\C\) be an \((n,k,\ell)\) MDS array code. If \(\C\) has  \((h,d)\) optimal-access repair schemes with constant repair matrices for any $h$ failed systematic nodes, then 
       $$\ell\ge\left(\frac{d-k+h}{h}\right)^{\left\lceil\left\lfloor\frac{k}{h}\right\rfloor\frac{h}{d-k+h}\right\rceil}.$$  
\end{theorem}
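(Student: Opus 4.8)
Throughout write $\gamma=\frac{d-k+h}{h}$ and $m=\left\lfloor\frac kh\right\rfloor$, so that $\beta=\ell/\gamma$ and the claimed bound reads $\ell\ge\gamma^{\lceil m/\gamma\rceil}$; note $\gamma>1$ since $d\ge k+1$. We may assume $m\ge1$, restrict attention to the $k$ systematic symbols $C_1,\dots,C_k$, and fix parity-check block matrices $\{A_{t,i}\}_{t\in[r],\,i\in[n]}$ with $\sum_{i\in[n]}A_{t,i}C_i=0$, normalized so the blocks indexed by systematic nodes are identity-like. Partition $[mh]\subseteq[k]$ into $m$ groups $G_1,\dots,G_m$ of size $h$. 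For each $a\in[m]$ the hypothesis, applied to the failed set $\cF=G_a$, supplies a constant optimal-access repair matrix $S_{G_a}\in\F^{\beta\times\ell}$ and a decoding map; let $T_a=\langle S_{G_a}\rangle\subseteq\F^{\ell}$, so $\dim T_a=\beta=\ell/\gamma$. The first task is a \emph{structural lemma}: from optimal access plus the constant-repair-matrix hypothesis plus the MDS property, after a global change of basis on $\F^{\ell}$ compatible with the chosen parity-check form, $S_{G_a}$ may be taken to be a coordinate projection, and the requirement that $\{S_{G_a}C_j:j\in\cR\}$ determine $\{C_i:i\in G_a\}$ for every admissible $\cR\subseteq[n]\bl G_a$, rewritten through the parity equations, yields the standard MSR-type repair conditions and, crucially, that each $T_a$ is invariant under the block maps $A_{t,i}$ for $t\in[r]$, $i\notin G_a$. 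This is the multi-node, $h$-failure analogue of the invariance facts used for single-node repair in~\cite{tamo2014,balaji2022}.

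The second ingredient is a \emph{direct-sum (tiling) lemma}: if $G_{a_1},\dots,G_{a_p}$ are distinct groups with $ph\le d-k+h$, equivalently $p\le\gamma$, then $T_{a_1},\dots,T_{a_p}$ are independent, i.e. $\dim(T_{a_1}+\cdots+T_{a_p})=p\beta$. I would prove this by a rank/counting argument: if the sum had dimension $<p\beta$, then for a suitable choice of $d$ helper nodes $\cR$ one could recover the $ph$ failed symbols in $\bigcup_j G_{a_j}$ while downloading data spanning strictly fewer than $ph$ independent symbols' worth from the helpers, which—via the MDS property and the parity equations—is impossible. This is where the bulk of the new work lies, since the single-node pigeonhole must be replaced by a simultaneous argument over the $h$ failed nodes in each group while a single constant matrix acts on all $d$ helpers; I expect this tiling lemma (together with showing the restricted code in the next step inherits the hypotheses) to be the main obstacle.

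The third step turns the tiling lemma into the bound by induction on $m$. For $m\le\gamma$ we have $\lceil m/\gamma\rceil=1$ and $\ell=\gamma\beta\ge\gamma$, the base case. For the inductive step set $t=\lceil\gamma\rceil-1$, so $1\le t\le\gamma$ and $t\beta<\ell$; by the tiling lemma the repair subspaces $T_1,\dots,T_t$ are independent, and by the structural lemma $V:=T_1\oplus\cdots\oplus T_t$ is invariant under every $A_{t,i}$. Passing to the quotient of the sub-packetization space by $V$ (equivalently, restricting the coordinate space $\F^\ell$ along $V$ in the manner used in the single-node proofs) produces an $(n,k,\bar\ell)$ MDS array code with $\bar\ell=\ell-t\beta$ that still admits $(h,d)$ optimal-access repair schemes with constant repair matrices for the $h$-subsets among the surviving groups $G_{t+1},\dots,G_m$; here the parameters $d,k,h$—and hence the ratio $\gamma$—are unchanged, so by induction $\bar\ell\ge\gamma^{\lceil(m-t)/\gamma\rceil}$. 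Since $t\le\gamma$ gives $\bar\ell=\ell(1-t/\gamma)\le\ell/\gamma$, we conclude
\[
\ell\;\ge\;\gamma\,\bar\ell\;\ge\;\gamma^{\,1+\lceil(m-t)/\gamma\rceil}\;\ge\;\gamma^{\lceil m/\gamma\rceil},
\]
the last inequality because $\lceil m/\gamma\rceil\le\lceil(m-t)/\gamma\rceil+\lceil t/\gamma\rceil$ and $\lceil t/\gamma\rceil=1$. Unwinding $\gamma$ and $m$ gives exactly $\ell\ge\left(\frac{d-k+h}{h}\right)^{\left\lceil\lfloor k/h\rfloor\frac{h}{d-k+h}\right\rceil}$.
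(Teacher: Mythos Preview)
Your tiling lemma (step~2) is false, and this is fatal to the approach. Already for $h=1$ the optimal-access MSR constructions that meet the bound give counterexamples: in the $\ell=s^{\lceil n/s\rceil}$ codes (with $s=d-k+1=\gamma$) each $T_a$ is a coordinate subspace obtained by fixing one base-$s$ digit, and any two nodes assigned to different digit positions have $\dim(T_a\cap T_b)=\ell/s^{2}>0$, so the claimed independence fails already for $p=2\le\gamma$. The cut-set heuristic you sketch does not apply, because the hypothesis supplies an $(h,d)$ optimal repair for each individual $G_a$, not a simultaneous optimal repair of $\bigcup_j G_{a_j}$; nothing prevents $\dim(T_{a_1}+\cdots+T_{a_p})<p\beta$. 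With the tiling lemma gone, the quotient induction of step~3 has nothing to stand on (and, separately, you give no argument that quotienting by $V$ preserves either the MDS property or the optimal-access repair hypothesis for the remaining groups).

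The paper's proof is different in kind. It establishes an \emph{intersection} bound rather than a sum bound: for pairwise disjoint $G_{a_1},\dots,G_{a_t}$ one has $\dim\bigl(\bigcap_{i} T_{a_i}\bigr)\le \ell/\gamma^{t}$; this is precisely where the invariance $\langle S_{\cF}A_{p,u}\rangle=\langle S_{\cF}\rangle$ for $u\notin\cF$ that you correctly identified is used. Then one double-counts edges in the bipartite graph on $\{e_1,\dots,e_\ell\}\times\{T_1,\dots,T_m\}$ with $e_j\sim T_a$ iff $e_j\in T_a$: optimal access forces each $T_a$ to be a coordinate subspace and hence to have degree exactly $\beta=\ell/\gamma$, while the intersection bound caps $\deg(e_j)\le\lfloor\log_\gamma\ell\rfloor$. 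Comparing totals gives $m\cdot\ell/\gamma\le\ell\lfloor\log_\gamma\ell\rfloor$, hence $\ell\ge\gamma^{\lceil m/\gamma\rceil}$.
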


\begin{theorem}\label{thm_constant2}
     Let \(\C\) be an optimal-access \((h,d)\) centralized MSR code with constant repair matrices. Then 
     $$\ell\ge\min\left\{\left(\frac{d-k+h}{h}\right)^{\lfloor\frac{k}{h}\rfloor},\left(\frac{d-k+h}{h}\right)^{\lceil\frac{n}{d-k+h}\rceil}\right\}.$$
\end{theorem}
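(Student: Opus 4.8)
The plan is to adapt the chain argument of Balaji et al.~\cite{balaji2022} for single-node repair to the $h$-node setting, and—crucially—to run it over \emph{all} $n$ nodes rather than only the $k$ systematic ones; this is what upgrades the first exponent from $\big\lceil\lfloor k/h\rfloor\cdot\frac{h}{d-k+h}\big\rceil$ (as in Theorem~\ref{thm_constant1}) to $\lfloor k/h\rfloor$. Throughout write $s=\frac{d-k+h}{h}$, so that the data sent by one helper for repairing an $h$-set $\cF$ has dimension $\beta=\frac{h\ell}{d-k+h}=\frac{\ell}{s}$ and $d-k+h=hs$.

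First I would extract the structural consequences of the hypotheses. Fix a parity-check description $\sum_{i\in[n]}A_{t,i}C_i=0$, $t\in[r]$, of the MDS array code, with the $A_{t,i}$ invertible $\ell\times\ell$ matrices. For an $h$-set $\cF$ with constant repair matrix $S_{\cF}$, the repair requirement forces: (i) \emph{invariance}, $\langle S_{\cF}A_{t,j}\rangle\subseteq\langle S_{\cF}\rangle$ for every $j\in[n]\bl\cF$ and every $t\in[r]$; (ii) \emph{recoverability}, the stacked operator $\big(S_{\cF}A_{t,i}\big)_{i\in\cF,\,t\in[r]}$ has full rank $h\ell$; and (iii) \emph{coordinate form}, optimal access forces $\langle S_{\cF}\rangle$, after a suitable basis change at each node, to be a coordinate subspace spanned by $\ell/s$ standard basis vectors. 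The genuinely new point over the single-node case is that one $\beta$-dimensional subspace $\langle S_{\cF}\rangle$ must serve all $h$ nodes of $\cF$ simultaneously, so the first task is to show that (ii)--(iii) still let us attach to $\cF$ a single "coordinate position'' of relative weight $1/s$—that the $h$ nodes of $\cF$ behave \emph{collectively} like one block of weight $1/s$, not $1/s$ per node.

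With that in hand I would run a greedy chain construction. Start with a block $B_1$ of $h$ systematic nodes, whose repair subspace fixes one coordinate position. Inductively, given blocks $B_1,\dots,B_t$ whose associated coordinate subspaces are in general position—meaning no block's subspace lies in the sum of the previous ones, which is exactly the condition that multiplies the ambient-dimension lower bound by $s$ at each step—I would try to append a further $h$-block $B_{t+1}$. Using invariance (i) together with the MDS property one shows that a fresh block can always be appended \emph{unless} a linear dependency among the coordinate subspaces is forced. If the construction runs for $\lfloor k/h\rfloor$ steps on disjoint blocks of systematic nodes it yields $\ell\ge s^{\lfloor k/h\rfloor}$, the first term. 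Otherwise a dependency appears earlier; feeding \emph{all} $n$ nodes into the construction and pigeonholing on how $[n]$ decomposes into blocks of size $d-k+h=hs$, such a dependency forces a "cyclic'' chain of length at least $\big\lceil\frac{n}{d-k+h}\big\rceil$ whose subspaces are still in general position, giving $\ell\ge s^{\lceil n/(d-k+h)\rceil}$, the second term. Taking the minimum of the two cases finishes the proof.

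The step I expect to be the main obstacle is precisely items (ii)--(iii) in the \emph{joint} setting: proving that the single shared subspace $\langle S_{\cF}\rangle$ of dimension $h\ell/(d-k+h)$ contributes only one coordinate position of weight $1/s$ to the chain, rather than forcing the $h$ nodes of $\cF$ to be "spent'' separately. This is where the stronger exponent $\lfloor k/h\rfloor$ is won over Theorem~\ref{thm_constant1}, and it requires a careful decomposition of $\langle S_{\cF}\rangle$ aligned with the individual recovery maps $S_{\cF}A_{t,i}$, $i\in\cF$. A secondary, more technical difficulty is the bookkeeping in the wrap-around case to land on exactly $\lceil n/(d-k+h)\rceil$—in particular handling $d<n-h$, where the chain must be closed using the $n-h-d$ nodes not queried in a given repair instead of passing to a shortened code (which would only yield $\lceil (d+h)/(d-k+h)\rceil$).
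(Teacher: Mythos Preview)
Your proposal has a genuine gap at its core mechanism. You claim that if the coordinate subspaces $\langle S_{B_1}\rangle,\dots,\langle S_{B_t}\rangle$ are in ``general position'' in the sense that no one lies in the \emph{sum} of the others, then each step multiplies the lower bound on $\ell$ by $s$. This is false as stated: having $t$ coordinate subspaces of dimension $\ell/s$ with none contained in the span of the rest only yields an additive dimension increment, not a multiplicative one. For instance with $s=2$, $\ell=4$, the three subspaces $\langle e_1,e_2\rangle$, $\langle e_1,e_3\rangle$, $\langle e_1,e_4\rangle$ satisfy your condition yet $\ell=4<2^3$. What actually drives the exponential bound in the paper is an \emph{intersection} estimate: for any $t$ pairwise disjoint $h$-blocks $\cF_1,\dots,\cF_t$ (which can be taken inside a systematic set by MDS),
\[
\dim\Bigl(\bigcap_{i=1}^{t}\langle S_{\cF_i}\rangle\Bigr)\le\Bigl(\tfrac{h}{d-k+h}\Bigr)^{t}\ell,
\]
proved by induction using the invertibility of $S_{\cP\to\cF_1}A(:,\cF_1)$ together with $\langle S_{\cP\to\cF_i}A(:,\cF_1)\rangle=\langle S_{\cF_1\to\cF_i}\rangle$ when $\cF_1\cap\cF_i=\emptyset$. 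This lemma is the entire technical content; your sketch never states or proves it, and the ``chain'' you describe does not substitute for it.

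The second half of your plan---the ``cyclic chain'' producing the exponent $\lceil n/(d-k+h)\rceil$---is also not what the paper does, and you do not indicate any mechanism for it. The paper instead runs a bipartite double count: vertices on one side are the standard basis vectors $e_1,\dots,e_\ell$, on the other side the repair subspaces $\langle S_{\cF_i}\rangle$ for $\lfloor n/h\rfloor$ disjoint $h$-blocks covering $[n]$; edges record containment $e_j\in\langle S_{\cF_i}\rangle$. Optimal access fixes the degree on the subspace side to $\ell/s$, giving $\lfloor n/h\rfloor\cdot\ell/s$ edges. On the $e_j$ side, the intersection lemma caps each degree by $\lfloor\log_s\ell\rfloor$ (once one assumes $\ell<s^{\lfloor k/h\rfloor}$, so that any $\lfloor k/h\rfloor$ of the subspaces already intersect trivially). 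Comparing the two counts yields the second term. Your ``wrap-around'' and ``greedy chain'' language suggests you are aiming for a different argument altogether; as written it does not close, and the obstacles you flag in your last paragraph are not peripheral bookkeeping but the whole proof.
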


\begin{theorem}\label{thm_nonconstant2}
     Let \(\C\) be an optimal-access \((h,d)\) centralized  MSR code with repair matrices independent of the choice of the remaining \((d-1)\) helper nodes. Then
      \[\ell\ge\min\left\{\left(\frac{d-k+h}{h}\right)^{\lfloor\frac{k-1}{h}\rfloor},\left(\frac{d-k+h}{h}\right)^{\lceil\lfloor\frac{n-1}{h}\rfloor\frac{h}{d-k+h}\rceil}\right\}.\]
\end{theorem}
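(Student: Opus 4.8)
The plan is to adapt the linear-algebraic framework of Balaji et al.~\cite{balaji2022} for single-node repair to the block setting, treating a failed set of $h$ nodes as if it were a single ``super-node''. First I would put $\C$ in systematic form: since $\C$ is MDS, after an invertible change of generator we may assume nodes $1,\dots,k$ store $C_1,\dots,C_k\in\F^\ell$ and node $k+t$ ($t\in[r]$, $r=n-k$) stores $P_t=\sum_{i=1}^{k}A_{i,t}C_i$ with invertible $\ell\times\ell$ matrices $A_{i,t}$ and $A_{i,1}=I$. This preserves the optimal-access $(h,d)$ repair property for failed sets of information nodes, which is all the first bound uses. Write $s=d-k+h$ and $\beta=h\ell/s$. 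For a set $\cF\subseteq[k]$ of size $h$, the optimal-access property lets us take each repair matrix $S_{j\to\cF}$ to have unit-vector rows, so its row space $W_{j\to\cF}$ is a coordinate subspace of $\F^\ell$ of dimension $\beta$. Cancelling the contributions of the non-failed information helpers in each parity equation forces invariance conditions of the form $W_{k+t\to\cF}A_{i,t}\subseteq W_{i\to\cF}$ for $i\in[k]\setminus\cF$ (hence equality of dimensions, $A_{i,t}$ being invertible); since the repair matrices are independent of the remaining $d-1$ helpers these conditions are well defined, and the MDS property is what lets us extract, even when $d<n-h$, a fixed collection of $s$ parity helpers whose equations alone must determine $C_\cF$. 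Recoverability of $C_\cF\in\F^{h\ell}$ then translates into invertibility of an $h\ell\times h\ell$ block matrix built from the $S_{k+t\to\cF}A_{f,t}$, $f\in\cF$, equivalently the spanning condition $\sum_t W_{k+t\to\cF}A_{f,t}=\F^\ell$ for each $f$, together with the coupling across $f\in\cF$.

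For the first term of the bound I would partition $\{2,\dots,k\}$ into $m=\lfloor(k-1)/h\rfloor$ pairwise-disjoint blocks $\cF_1,\dots,\cF_m$ of size $h$. Because the blocks are disjoint, each $W_{k+1\to\cF_j}$ is $A_{i,t}$-invariant for every $i$ in a different block, so the nested intersections $U_0=\F^\ell\supseteq U_1\supseteq\cdots\supseteq U_m$, with $U_j=\bigcap_{j'\le j}W_{k+1\to\cF_{j'}}$, are invariant under all the $A_{i,t}$ relevant to the next block. Combining this invariance with the spanning/recoverability condition for $\cF_j$, I would show $\dim U_j\ge\frac{h}{s}\dim U_{j-1}$, so that $\dim U_m\ge\ell\,(h/s)^m$; a separate non-vanishing argument (using recoverability of one further block) shows $\dim U_m\ge1$, giving $\ell\ge(s/h)^{\lfloor(k-1)/h\rfloor}$. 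If instead $\ell$ is smaller than this, the structural consequences just derived get pushed further by also using parity nodes as failed nodes (re-systematizing as needed): one builds disjoint blocks among all $n$ nodes, but the chain of subspaces now advances only once per roughly $s/h$ blocks, which after the bookkeeping yields $\ell\ge(s/h)^{\lceil\lfloor(n-1)/h\rfloor h/s\rceil}$. Taking the two regimes together gives the claimed minimum.

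The main obstacle, compared with $h=1$, is the coupling inside a failed block: $W_{j\to\cF}$ is a single $\beta$-dimensional space that must serve all $h$ failed nodes at once, so the clean ``dimension divides by $s$'' step of the single-node proof is no longer available and must be replaced by the weaker ``dimension is multiplied by at least $h/s$'' step; proving even this requires squeezing information out of the invertible $h\ell\times h\ell$ recoverability matrix rather than out of a direct-sum decomposition of $\F^\ell$, and this is the technical crux. A second, more bookkeeping-heavy obstacle is the passage from $d=n-h$ to general $d$ with repair matrices merely independent of the remaining $d-1$ helpers: one must use the MDS property to show that a scheme working for all helper sets of size $d$ forces the same invariance/recoverability skeleton as in the full-helper case on a canonical sub-collection of $s$ parity and the information helpers. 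Finally, the floor and ceiling adjustments (when $h\nmid k-1$, $h\nmid n-1$, or $h\nmid s$) need care in counting how many blocks a chain of a given length can accommodate; extracting the tightest such brackets is precisely where the generating-function refinement of the later sections enters, but for Theorem~\ref{thm_nonconstant2} the stated brackets follow from the crude count.
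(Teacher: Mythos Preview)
Your central recursion has its inequality pointing the wrong way, and as written the deduction does not go through. What the invariance and recoverability conditions actually give (this is the paper's Lemma~\ref{lem_nonconstant1}) is the \emph{upper} bound $\dim U_t\le(h/s)^t\ell$: if $S_p$ spans $\bigcap_{i\le t+1}\langle S_{p\to\cF_i}\rangle$, then on one hand $\dim\langle\mathrm{diag}(S_{p_1},\dots,S_{p_r})A(:,\cF_1)\rangle=r\dim\langle S_{p_1}\rangle$ because $S_{\cP\to\cF_1}A(:,\cF_1)$ is invertible, while on the other hand this row space sits inside $\bigcap_{i\ge2}\langle S_{\cF_1\to\cF_i}\rangle$, of dimension at most $h(h/r)^t\ell$ by induction. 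With your inequalities $\dim U_m\ge\ell(h/s)^m$ and $\dim U_m\ge1$ there is simply nothing to combine: two lower bounds on the same quantity do not yield $\ell\ge(s/h)^m$. And even after you flip the inequality, the ``non-vanishing'' of $U_m$ that you invoke is neither proved in the paper nor, as far as I can see, a consequence of recoverability of one more block; recoverability is a spanning statement, not a statement about intersections being nontrivial.

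The second and more serious gap is that your route to the $\lceil\lfloor(n-1)/h\rfloor h/s\rceil$ exponent (``the chain advances once per roughly $s/h$ blocks'') is not the argument, and I do not see how to turn it into one. The paper's proof is a bipartite double count that uses optimal access in a way your sketch never does: because each $\langle S_{n\to\cF_i}\rangle$ is a \emph{coordinate} subspace, it contains exactly $h\ell/s$ of the standard basis vectors $e_j$, so the bipartite graph joining the $e_j$'s to the $\lfloor(n-1)/h\rfloor$ repair subspaces has exactly $\lfloor(n-1)/h\rfloor\cdot h\ell/s$ edges. On the other side, the intersection upper bound forces the degree of each $e_j$ to be at most $\lfloor\log_{s/h}\ell\rfloor$ (once we are in the case $\ell<(s/h)^{\lfloor(k-1)/h\rfloor}$, the intersection of more than this many subspaces has dimension $<1$, hence cannot contain $e_j$). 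Equating the two edge counts gives the second term directly; the first term enters only as the case distinction, and no non-vanishing of $U_m$ is ever needed.
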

\begin{remark}
\begin{enumerate}
    \item
    In~\cref{thm_constant2} if we let $h=1$, then we get a tight bound on the sub-packetization of MSR codes with constant repair matrices. This bound is achieved by the construction in ~\cite{Sasidharan2016,Li2024,Ye2017,Kumar2023,Li2018,Myna2018}. 
    \item 
    In~\cref{thm_nonconstant2} if we let \(d = n - h\), then we can get a lower bound for optimal-access \((h, n - h)\) centralized MSR codes directly.
    \item
    As a cooperative MSR code is also a centralized MSR code by the result of Ye et al.~\cite{Ye2019}, our lower bounds also apply to cooperative MSR codes.
\end{enumerate}
\end{remark}
Furthermore, we can employ generating functions to conduct a more refined analysis.
For any \(1\le c \le \lfloor\frac{h}{2}\rfloor\), we set 
$$\Delta=\Delta(c):=\left(\frac{h-c}{d-k+h}\right)^2+\frac{4c}{d-k+h}$$
and 
$$\alpha=\alpha(c):=\frac{(c-h)+\sqrt{\Delta}(d-k+h)}{2c}.$$

\begin{theorem}\label{generating_1}
     Let \(\C\) be an optimal-access \((h,d)\) centralized MSR code with constant repair matrices. Then 
     $$\ell\ge\min\left\{ \frac{\sqrt{\Delta}}{4}\alpha^{\left(\left\lfloor\frac{k-h}{h-c}\right\rfloor+1\right)},\frac{\sqrt{\Delta}}{4}\alpha^{\left\lceil\left(\left\lfloor\frac{n-h}{h-c}\right\rfloor+1\right)\cdot\frac{h}{d-k+h}\right\rceil}\right\},$$
     for any \(1\le c \le \lfloor\frac{h}{2}\rfloor\) and $\Delta$, $\alpha$ defined above.
\end{theorem}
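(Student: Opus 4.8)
The plan is to set up a recursive dimension inequality on the repair subspaces — exactly as in the proof of Theorem \ref{thm_constant2} — but instead of solving the recursion by a crude one-step estimate, I would encode the recursion in a generating function and extract sharp asymptotics from the dominant pole. Concretely, fix an optimal-access $(h,d)$ centralized MSR code with constant repair matrices $S_\cF$ for every $h$-subset $\cF$. Since the code is optimal-access, each $S_\cF$ has rank $\beta=\frac{h\ell}{d-k+h}$, and the optimal-access property forces the repair subspace $\langle S_\cF\rangle$ to have a strong structure: for each coordinate $j\notin\cF$, the restriction of $\langle S_\cF\rangle$ to the $j$-th block is either the full $\F^\ell$ or (after the standard normalization) a fixed subspace, and the MDS/interference-alignment constraints tie together the subspaces $\langle S_\cF\rangle$ as $\cF$ ranges over a chain. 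Let $\cV_\cF$ denote the relevant common subspace on a fixed coordinate; the key combinatorial lemma (implicit already in the proof of Theorems \ref{thm_constant1}--\ref{thm_constant2}) is that for a nested family of failure sets differing by a controlled number of elements, $\dim(\cV_{\cF_1}\cap\cdots\cap\cV_{\cF_t})$ drops by a definite factor at each step, yielding an inequality of the form $a_{m} \le \frac{h-c}{d-k+h}\,a_{m-1} + \frac{c}{d-k+h}\,a_{m-2}$ (or the dual growth version $\ell \ge$ product form) once we split the $h$ failed nodes into a ``reused'' part of size $c$ and a ``fresh'' part of size $h-c$, which is the source of the two-term recursion and of the parameters $\Delta,\alpha$.

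The main steps, in order: (i) recall from the proofs of Theorems \ref{thm_constant1}--\ref{thm_constant2} the normalization of the repair matrices and the fact that optimal access plus the MDS property gives, for any two failure sets $\cF,\cF'$ and helper coordinate $j$, the containment/intersection relations among $\langle S_\cF\rangle$ and $\langle S_{\cF'}\rangle$ restricted to coordinate $j$; (ii) choose $c$ with $1\le c\le\lfloor h/2\rfloor$ and build a sequence of failure sets $\cF_0,\cF_1,\dots$ in which each $\cF_m$ shares $c$ nodes with $\cF_{m-1}$ and introduces $h-c$ new nodes, so that one copy of the ``single-step'' inequality relates the dimension of the running intersection after step $m$ to those after steps $m-1$ and $m-2$; (iii) translate the resulting linear recursion $d_m \le \tfrac{h-c}{d-k+h}d_{m-1}+\tfrac{c}{d-k+h}d_{m-2}$ into a generating function $D(x)=\sum_m d_m x^m$, whose denominator is $1-\tfrac{h-c}{d-k+h}x-\tfrac{c}{d-k+h}x^2$; (iv) factor the denominator — its reciprocal roots are $\alpha$ and a negative number of smaller absolute value, with $\alpha$ as defined in the statement and $\sqrt{\Delta}$ appearing through the discriminant — and read off $d_m \sim \frac{\sqrt{\Delta}}{4}\,\alpha^{-m}$ type bounds, so that non-triviality of the dimension count ($d_m\ge 1$) forces $\ell\ge\frac{\sqrt{\Delta}}{4}\alpha^{m}$ for the largest admissible $m$; (v) count how many steps are available. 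Using only systematic nodes gives $m=\lfloor\frac{k-h}{h-c}\rfloor+1$ fresh-node batches; using all $n$ nodes but paying the cut-set/pigeonhole factor $\frac{h}{d-k+h}$ between ``rounds'' gives the exponent $\lceil(\lfloor\frac{n-h}{h-c}\rfloor+1)\cdot\frac{h}{d-k+h}\rceil$. Taking the worse of the two regimes yields the stated minimum.

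The hard part will be step (ii)--(iii): making the two-term recursion genuinely hold with the claimed coefficients. In the single-node ($h=1$) and the one-step analyses of Theorems \ref{thm_constant1}--\ref{thm_constant2}, each new batch of failed nodes interacts with the running intersection through a single alignment constraint, giving a one-term (geometric) recursion; to get a legitimate \emph{two}-term recursion one must arrange the chain of failure sets so that a block of $c$ ``old'' nodes contributes an extra, independent dimension-drop that is only visible two steps back, and one must verify that these two contributions do not double-count — i.e., that the corresponding subspaces are in general position. This is exactly where the constraint $c\le\lfloor h/2\rfloor$ enters (so that the $c$ reused and $h-c$ fresh coordinates can be kept disjoint across consecutive sets), and checking the independence will require carefully invoking the MDS property on a sufficiently large set of coordinates, as in the proof of Theorem \ref{thm_constant2}. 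Once the recursion is established, steps (iv)--(v) are routine generating-function bookkeeping: partial fractions, bounding the subdominant term, and an index count identical in spirit to the exponent arithmetic already carried out for Theorems \ref{thm_constant1} and \ref{thm_nonconstant2}.
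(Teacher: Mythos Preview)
Your overall plan is the paper's: build the overlapping chain $\cF_i=\{(i-1)(h-c)+1,\ldots,(i-1)(h-c)+h\}$, prove a two-term recursion $d_{m}\le \frac{h-c}{d-k+h}\,d_{m-1}+\frac{c}{d-k+h}\,d_{m-2}$ for $d_m=\dim\bigcap_{i\le m}\langle S_{\cF_i}\rangle$, read off $d_m\le \frac{4}{\sqrt{\Delta}}\alpha^{-m}\ell$ from the denominator $1-\frac{h-c}{d-k+h}x-\frac{c}{d-k+h}x^2$, and feed this into the edge count of Theorem~\ref{thm_constant2}. Two places where your write-up does not match what actually makes the proof work:

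\textbf{The recursion comes from containment, not general position.} You describe the hard step as showing the two contributions ``do not double-count'' and that ``the corresponding subspaces are in general position''. That is a lower-bound concern; here we only need an \emph{upper} bound on $d_m$, so no transversality enters. The mechanism is \cref{key_lem}, specifically \eqref{key_lem_2}: when $\cF_2$ meets $\cF_1$ in $c$ nodes, the image space $\langle(I_r\otimes S_{\cF_2})A(:,\cF_1)\rangle$ is contained in a space that splits column-wise into $h-c$ block columns controlled by $\langle S_{\cF_2}\rangle$ and $c$ block columns bounded only by the trivial $c\ell$. Since $c\le\lfloor h/2\rfloor$ forces $\cF_1\cap\cF_i=\emptyset$ for $i\ge3$, intersecting with the constraints from $\cF_3,\dots,\cF_{t+1}$ gives $r\,d_{t+1}\le(h-c)\,d_t+c\,d_{t-1}$. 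So the $c$ shared nodes \emph{weaken} one constraint (producing the $d_{m-2}$ term), rather than contributing an ``extra drop visible two steps back''.

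\textbf{The endgame is the bipartite edge count, not ``rounds''.} Step (iv) asserts $d_m\ge1$ and deduces $\ell\ge\frac{\sqrt{\Delta}}{4}\alpha^m$; but nothing forces $\bigcap_i\langle S_{\cF_i}\rangle$ to be nonzero, so this fails as stated. The paper runs the dichotomy: either $\ell\ge\frac{\sqrt{\Delta}}{4}\alpha^{\lfloor(k-h)/(h-c)\rfloor+1}$ and we are done, or the dimension bound extends to all $m\le\lfloor\frac{n-h}{h-c}\rfloor+1$, and one reuses verbatim the bipartite graph between standard basis vectors $e_j$ and the subspaces $\langle S_{\cF_i}\rangle$. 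Optimal access gives each $\langle S_{\cF_i}\rangle$ degree exactly $\frac{h\ell}{d-k+h}$ (this is the origin of the $\frac{h}{d-k+h}$ factor, not any ``cut-set between rounds''), while each $e_j$ has degree $|\mathcal N_j|\le\log_\alpha\!\big(\tfrac{4}{\sqrt{\Delta}}\ell\big)$ because $e_j\in\bigcap_{i\in\mathcal N_j}\langle S_{\cF_i}\rangle$ supplies the missing $\dim\ge1$. Comparing the two edge counts gives the second term of the minimum.
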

\begin{theorem}\label{generating_2}
     Let \(\C\) be an optimal-access \((h,d)\) centralized  {MSR} code with repair matrices independent of the choice of the remaining \((d-1)\) helper nodes. Then 
     $$l\ge\min\left\{ \frac{\sqrt{\Delta}}{4}\alpha^{\left(\left\lfloor\frac{k-1-h}{h-c}\right\rfloor+1\right)},\frac{\sqrt{\Delta}}{4}\alpha^{\left\lceil\left(\left\lfloor\frac{n-1-h}{h-c}\right\rfloor+1\right)\cdot\frac{h}{d-k+h}\right\rceil}\right\},$$
     for any \(1\le c \le \lfloor\frac{h}{2}\rfloor\) and $\Delta$, $\alpha$ defined above.
\end{theorem}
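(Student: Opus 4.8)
The plan is to mirror the strategy behind Theorem~\ref{thm_nonconstant2}, but replace the crude counting argument by a generating-function recursion that exploits a \emph{partial} overlap parameter $c$. As in the constant-repair-matrix analysis underlying Theorem~\ref{generating_1}, the starting point is the structural consequence of the optimal-access property: for each block of $h$ systematic nodes one obtains, on every other coordinate, repair subspaces of dimension $\ell h/(d-k+h)$ that must ``diagonalize'' the generator matrix. Concretely, after fixing the failed set $\cF$ and using that the repair matrices are independent of the remaining $(d-1)$ helpers, one extracts a family of subspaces $W_1,\dots,W_m$ of $\F^\ell$ (one per systematic coordinate, minus the one lost node, so $m = \lfloor (k-1)/h\rfloor$ or the analogous count up to $\lfloor (n-1-h)/(h-c)\rfloor+1$ in the ``large $n$'' regime) together with invariance relations of the form $A_i W_j = W_j$ for $j$ in a large index set. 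The key new ingredient, as in the $c$-refined argument, is that instead of insisting the relevant subspaces intersect trivially in a chain, one allows each successive subspace to share a $c$-dimensional piece with the accumulated span; this is exactly what makes the parameter $c$ enter.

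The central step is to set up the recursion. Let $D_t$ denote the dimension of the span of the first $t$ subspaces in a carefully chosen ordering. The optimal-access relations force, for each new subspace added, a growth inequality of the shape $D_{t} \ge D_{t-1} + \frac{h}{d-k+h}(\text{something}) - c$, but the honest bookkeeping couples several consecutive terms, producing a genuine linear recurrence $a_{t} \ge \frac{h}{d-k+h} a_{t-1} + \frac{c}{d-k+h} a_{t-2}$ (after normalizing). Its characteristic equation is $c x^2 - (h-c) x - (d-k+h) \cdot (\text{const}) = 0$ up to rescaling, whose dominant root is precisely the $\alpha=\alpha(c)$ defined in the excerpt, with discriminant $\Delta=\Delta(c)$. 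Solving the recurrence and keeping the dominant term yields $\ell \ge \frac{\sqrt{\Delta}}{4}\,\alpha^{(\text{number of steps})}$, where the number of steps is $\lfloor (k-1-h)/(h-c)\rfloor + 1$ when we run the argument inside the systematic part, and $\lceil(\lfloor(n-1-h)/(h-c)\rfloor+1)\cdot\frac{h}{d-k+h}\rceil$ when we instead run it across all $n$ coordinates and pay the $\frac{h}{d-k+h}$ conversion factor (this is the same dichotomy as in Theorem~\ref{thm_nonconstant2}, and taking the minimum of the two bounds covers both regimes). Because $c$ is free in $[1,\lfloor h/2\rfloor]$, one optimizes at the end.

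The main obstacle I anticipate is verifying that the $c$-relaxed overlap hypothesis is actually \emph{attainable} — that one can always select the ordering of systematic blocks and the subspace representatives so that each new block contributes at least $\frac{h}{d-k+h}\cdot(\text{running dimension})$ fresh dimensions minus at most $c$, uniformly along the chain. This requires a pigeonhole/dimension-counting lemma on how the repair subspaces for different failed sets interact, and getting the constants to close the recurrence exactly (rather than with a lossy additive error) is the delicate part. A secondary technical point is handling the floor/ceiling boundary terms and the $\frac{\sqrt{\Delta}}{4}$ prefactor cleanly: these come from bounding the subdominant root's contribution and from the first one or two steps of the recurrence where the asymptotic estimate is not yet valid, so I would treat the base cases of the recursion separately and then invoke monotonicity. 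Once the recursion and its base cases are in place, the passage from the dimension bound to the stated $\ell$ bound, and the reduction from the ``MSR code'' hypothesis to the ``optimal repair for $h$ systematic nodes'' hypothesis, follow the same template as in the proofs of Theorems~\ref{thm_constant1}--\ref{thm_nonconstant2}.
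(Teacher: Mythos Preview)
Your proposal has the right flavor (a recursion governed by a quadratic whose dominant root is $\alpha$), but it misidentifies the object the recursion acts on and, more seriously, omits the mechanism that actually produces the exponent and the prefactor in the theorem.

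In the paper the recursion is on the \emph{intersection} of repair subspaces, not on a span. For overlapping failed sets $\cF_1,\dots,\cF_t$ with successive overlap $c$, one proves (Lemma~\ref{generating_lemma_2}) that $\dim\big(\bigcap_{i}\langle S_{n\to\cF_i}\rangle\big)\le B_t\,\ell$, where the sequence $B_t$ satisfies $B_{t}=\frac{h-c}{d-k+h}B_{t-1}+\frac{c}{d-k+h}B_{t-2}$ (note the coefficient is $\frac{h-c}{d-k+h}$, not $\frac{h}{d-k+h}$ as you wrote). The inequality goes in the \emph{upper-bound} direction for a shrinking quantity; your growth inequality $D_t\ge D_{t-1}+\cdots$ on spans does not obviously yield $\ell\ge\alpha^{t}$, and you give no argument for why a span-growth recursion with those coefficients would force $D_t$ against $\ell$ in the required way.

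The larger gap is the passage from the recursion to the final bound. The factor $\frac{h}{d-k+h}$ in the exponent and the prefactor $\frac{\sqrt\Delta}{4}$ do \emph{not} come from ``paying a conversion factor across all $n$ coordinates'' or from the subdominant root directly. They come from a bipartite double-counting argument that your proposal never mentions: by the optimal-access property, each repair subspace $\langle S_{n\to\cF_i}\rangle$ contains exactly $\frac{h\ell}{d-k+h}$ standard basis vectors, while the intersection bound $B_t\le\frac{4}{\sqrt\Delta}\alpha^{-t}$ caps the number of such subspaces that can simultaneously contain a fixed $e_j$ by $\lfloor\log_\alpha(\tfrac{4}{\sqrt\Delta}\ell)\rfloor$. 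Counting edges two ways gives $(\lfloor\frac{n-1-h}{h-c}\rfloor+1)\cdot\frac{h\ell}{d-k+h}\le \ell\cdot\log_\alpha(\tfrac{4}{\sqrt\Delta}\ell)$, which is exactly where both the exponent and the $\frac{\sqrt\Delta}{4}$ appear. Without this step your outline has no route from the recursion to the stated inequality; in particular your ``main obstacle'' paragraph is aimed at the wrong target, since the delicate point is not arranging a $c$-overlap chain (those $\cF_i$ are chosen explicitly as consecutive sliding windows), but rather combining the intersection decay with the optimal-access pigeonhole on standard basis vectors.
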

\begin{remark}
For high-rate MSR codes, and for sufficiently large \( n \), it can be shown that the bounds in Theorems~\ref{generating_1} and~\ref{generating_2} are greater than those in Theorems~\ref{thm_constant2} and~\ref{thm_nonconstant2}. Therefore, unlike the case when \( h=1 \), the generalized bounds in Theorems~\ref{thm_constant2} and~\ref{thm_nonconstant2} are generally not tight.
\end{remark}

\subsection{Outline of This Paper} 
The rest of this paper is organized as follows. In \cref{sec_linear}, we will study the linear algebraic description for centralized model, which plays a fundamental role in the subsequent proofs. In \cref{sec_constant}, we will focus on deriving the lower bounds for optimal-access \((h,d)\) centralized MSR codes and prove  \cref{thm_constant1} and \cref{thm_constant2}. In \cref{non-constant}, we will extend our analysis to the case repair matrices independent of the choice of the remaining \((d-1)\) helper nodes and prove \cref{thm_nonconstant2}. In \cref{improving_bounds}, we improve our bounds by using generating functions and prove \cref{generating_1} and \cref{generating_2}. Section~\ref{sec:conclusion} provides the conclusion of our paper.

\section{Linear Algebraic Settings for Centralized Repair}\label{sec_linear}
In this section, we consider the problem of repairing $h$ systematic nodes using the remaining $n-h$ nodes.
We will follow the notations used by Balaji~et al.~\cite{balaji2022}.
Let $\C$ be an $(n,k,\ell)$ array code over $\F_q$.
Write \(r=n-k\). Suppose that \( \cU = \{ u_1, \dots, u_k \} \) is a collection of $k$ systematic nodes of $\C$,
and let \( \cP = [n] \bl \cU=\{ p_1, \dots, p_r \} \) be the left $r$ parity nodes.
Then we can get a generator matrix \(G\) of $\C$ as
\begin{align*}
    G = \bbordermatrix{ & u_1 & \cdots &u_k \cr
              u_1&I_\ell&& \cr
              \vdots&&\ddots& \cr
              u_k&&&I_\ell\cr
              p_1&A_{p_1,u_1}&\ldots&A_{p_,u_k}\cr
              \vdots&\vdots&\ddots&\vdots\cr
              p_r&A_{p_r,u_1}&\ldots&A_{p_r,u_k}\cr
              }
\end{align*}
where \(I_\ell\) is the \(\ell\times \ell\) identity matrix and \(A_{p_i,u_j},i\in[r],j\in[k]\) are \(\ell\times \ell\) matrices over \(\F_{q}\). 
Here the rows of $G$ are indexed by the nodes $u_1, \ldots, u_k, p_1,\ldots,p_k$, and the columns by $u_1, \ldots, u_k$. 
Write
\[A=\begin{bmatrix}
    A_{p_1,u_1}&\ldots&A_{p_1,u_k}\\
    \vdots&\ddots&\vdots\\
    A_{p_r,u_1}&\ldots&A_{p_r,u_k}
\end{bmatrix}.\]
It is well-known that $\C$ is an MDS array code if and only if every square block submatrix of \(A\) is invertible.

The following notations will be used throughout this paper.
For any subset \(\cE=\{u_{i_1},\ldots,u_{i_t}\}\subseteq \cU\), let \( A(:,\cE) \) be the submatrix of $A$ consisting of
the block columns indexed by $\{u_{i_1},\ldots,u_{i_t}\}$. 
For any $\cF\subseteq [n]$ of size $h\ (2\leq h\leq r)$, $\cR\subseteq [n]\backslash \cF$ of size $d\ (k+1\leq d\leq n-h)$, 
and its subset  \(\mathcal{H}=\{{i_1},\ldots,{i_p}\}\subseteq\cR\), we write 
\[S^\cR_{\mathcal{H}\rightarrow\cF}=\di( S^\cR_{i_1\rightarrow \cF},\ldots,S^\cR_{i_p\rightarrow \cF}).\]
If \(d=n-h\), we can drop the superscript \(\cR\). For a matrix \(B\), we denote its row space by \(\langle B \rangle\). 

Now we give a linear algebraic description of the optimal repair schemes for $h$ failed systematic nodes using the remaining $n-h$ nodes.
\begin{lemma}\label{algebraic_setting}
Let $\C$ be an $(n,k,\ell)$ MDS array code over $\F_q$. Suppose that for any $h$ failed systematic nodes \(\cF=\{u_{i_1},\ldots,u_{i_h}\}\subseteq \cU\), there exists a \((h,d)\) optimal repair scheme for $\cF$ with \(\cR=[n]\backslash \cF\), that is,  
there exist repair matrices \( S_{j \rightarrow \cF} \) of sizes \( \beta \times \ell \), $j\in[n]\bl \cF$, and a matrix \(M_\cF\) of size \( h\ell \times d\beta \), such that 
\begin{equation}\label{optimalrepir}
    \begin{bmatrix}
    C_{u_{i_1}}\\
    \vdots\\
    C_{u_{i_h}}
\end{bmatrix}=M_\cF\begin{bmatrix}
    S_{u_{i_{h+1}}\rightarrow \cF}C_{u_{i_{h+1}}}\\
    \vdots\\
    S_{u_{i_k}\rightarrow \cF}C_{u_{i_k}}\\
    S_{p_{1}\rightarrow \cF}C_{p_{1}}\\
    \vdots\\
    S_{p_{r}\rightarrow \cF}C_{p_{r}}
\end{bmatrix} \ \text{for any} \; C=(C_1,\ldots,C_n)^{T}\in\mathcal{C},
\end{equation}
where we write \(\cU \bl \cF=\{u_{i_{h+1}},\ldots,u_{i_{k}}\}\).
Then we have 
     \begin{enumerate}
         \item 
         The square matrix
         \(S_{\cP\rightarrow\cF}A(:, \cF)\)
        is of full rank.\\
         \item For $1\leq t\leq r$ and $h+1\leq z\leq k$, we have
         \(\langle S_{p_t\rightarrow \cF}A_{p_t,u_{i_z}}\rangle=\langle S_{u_{i_z}\rightarrow \cF}\rangle\).\\
          \item For any $j\in[n]\bl \cF$, the repair matrix \( S_{j\rightarrow\cF} \) has full row rank.
     \end{enumerate}
    \end{lemma}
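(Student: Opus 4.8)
The plan is to exploit the fact that the repair equation~\eqref{optimalrepir} must hold for \emph{every} codeword $C\in\C$, and to feed it carefully chosen codewords. Recall that a codeword is determined by its systematic part $(C_{u_1},\dots,C_{u_k})$, and then $C_{p_t}=\sum_{j=1}^{k}A_{p_t,u_j}C_{u_j}$ for $t\in[r]$. So I would first substitute the codewords supported on the failed coordinates $\cF$ — i.e.\ set $C_{u_{i_z}}=0$ for all $z\geq h+1$ and let $(C_{u_{i_1}},\dots,C_{u_{i_h}})$ range over $\F^{h\ell}$ freely. On such codewords the helper data from a surviving systematic node $u_{i_z}$ vanishes, while $C_{p_t}=\sum_{m=1}^{h}A_{p_t,u_{i_m}}C_{u_{i_m}}$. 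Plugging into~\eqref{optimalrepir}, the left side runs over all of $\F^{h\ell}$ while the right side factors as $M_\cF \cdot S_{\cP\to\cF}\,A(:,\cF)\cdot(C_{u_{i_1}},\dots,C_{u_{i_h}})^T$ (the surviving-systematic blocks contribute zero). Since the left side is the identity map on $\F^{h\ell}$, the composite $M_\cF\, S_{\cP\to\cF}A(:,\cF)$ must be surjective on $\F^{h\ell}$, hence $S_{\cP\to\cF}A(:,\cF)$ — an $h\ell\times h\ell$ matrix once we know each $S_{j\to\cF}$ has $\beta=\ell$ rows (note $d=n-h$ gives $\beta=\frac{h\ell}{n-h-k+h}=\frac{h\ell}{r}$, and $S_{\cP\to\cF}A(:,\cF)$ is $h\beta\times h\ell$, which is square exactly when $h\beta=h\ell$; more precisely it is $r\beta\times h\ell = h\ell\times h\ell$) — must have full rank. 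This proves part~(1), and as a byproduct surjectivity forces $r\beta\ge h\ell$, i.e.\ $\beta\ge \ell$, while the cut-set/dimension count gives the reverse, pinning down part~(3) for the parity nodes; I will return to the systematic repair matrices below.

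For part~(2), I would instead activate a single surviving systematic node. Fix $z$ with $h+1\le z\le k$ and consider codewords with $C_{u_j}=0$ for every systematic $j\notin\cF\cup\{u_{i_z}\}$, and also $C_{u_{i_m}}=0$ for $m\in[h]$; let $C_{u_{i_z}}$ range freely over $\F^\ell$. Then the left side of~\eqref{optimalrepir} is $0$ for all such codewords, the only surviving-systematic helper term is $S_{u_{i_z}\to\cF}C_{u_{i_z}}$, and $C_{p_t}=A_{p_t,u_{i_z}}C_{u_{i_z}}$ so the parity helper terms are $S_{p_t\to\cF}A_{p_t,u_{i_z}}C_{u_{i_z}}$. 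Writing out $0=M_\cF(\cdots)$ and using that this holds for all $C_{u_{i_z}}$, we get that the column space of the stacked matrix $\bigl[\,S_{u_{i_z}\to\cF}^T\ \big|\ (S_{p_1\to\cF}A_{p_1,u_{i_z}})^T\ \big|\ \cdots\ \big|\ (S_{p_r\to\cF}A_{p_r,u_{i_z}})^T\,\bigr]^T$ lies in $\ker M_\cF$. Combined with part~(1) — which tells us $M_\cF$ restricted to the parity block $S_{\cP\to\cF}A(:,\cF)$ is invertible, hence $M_\cF$ has rank exactly $h\ell$ and its kernel meets the parity coordinate block trivially — a rank/dimension chase shows each $\langle S_{p_t\to\cF}A_{p_t,u_{i_z}}\rangle$ is forced to coincide with $\langle S_{u_{i_z}\to\cF}\rangle$; one also needs the MDS property (every block submatrix of $A$ invertible) to guarantee $A_{p_t,u_{i_z}}$ is invertible so no rank is lost passing through it. This is the step I expect to be the main obstacle: carefully organizing the kernel/rank bookkeeping so that the inclusion $\langle S_{p_t\to\cF}A_{p_t,u_{i_z}}\rangle\subseteq\langle S_{u_{i_z}\to\cF}\rangle$ (from dimension counts) and the reverse inclusion (from invertibility plus the full-rank conclusion of part~(1)) both come out cleanly, rather than just one direction.

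Finally, part~(3): the parity repair matrices $S_{p_t\to\cF}$ have full row rank because $S_{\cP\to\cF}A(:,\cF)=S_{\cP\to\cF}A(:,\cF)$ is full rank by part~(1) and $A(:,\cF)$ is a tall full-column-rank matrix, so no row of the block-diagonal $S_{\cP\to\cF}$ can be redundant; and the systematic repair matrices $S_{u_{i_z}\to\cF}$ have full row rank because by part~(2) their row space equals $\langle S_{p_t\to\cF}A_{p_t,u_{i_z}}\rangle$, which has dimension $\beta$ since $S_{p_t\to\cF}$ has $\beta$ independent rows and $A_{p_t,u_{i_z}}$ is invertible. So I would prove~(1) first, deduce the parity case of~(3) and the equality $\beta=\ell$ along the way, then prove~(2), and close out the systematic case of~(3) as an immediate corollary. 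The only genuinely delicate point is the two-sided inclusion in~(2); everything else is substitution of structured codewords and linear-algebraic counting.
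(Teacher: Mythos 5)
Your plan is essentially the paper's own proof: substitute codewords supported only on $\cF$ to conclude $M_{\cF}(:,\cP)S_{\cP\to\cF}A(:,\cF)=I_{h\ell}$, then substitute codewords supported on a surviving systematic node to get the inclusion $\langle S_{p_t\to\cF}A_{p_t,u_{i_z}}\rangle\subseteq\langle S_{u_{i_z}\to\cF}\rangle$, and close both parts by a dimension count. Two small corrections are warranted. First, you conclude \(\beta\ge\ell\) (and then \(\beta=\ell\)) from surjectivity, but \(r\beta\ge h\ell\) only gives \(\beta\ge h\ell/r\); in fact by definition \(\beta=h\ell/(d-k+h)=h\ell/r\) when \(d=n-h\), so \(\beta<\ell\) whenever \(h<r\). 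The matrix \(S_{j\to\cF}\) is \(\beta\times\ell\) with \(\beta\le\ell\), and ``full row rank'' in part~(3) means rank \(\beta\), not \(\beta=\ell\); your later arguments actually use only rank \(\beta\), so nothing breaks, but the byproduct claim is false as stated. Second, in part~(1) you write the right-hand side as \(M_\cF\cdot S_{\cP\to\cF}A(:,\cF)\cdot(\cdots)\), which is a dimension mismatch (\(M_\cF\) has \(d\beta\) columns, not \(r\beta\)); the paper's formulation \(M_\cF(:,\cP)\,S_{\cP\to\cF}A(:,\cF)\), extracting only the parity block-columns of \(M_\cF\), is what you mean by ``the surviving-systematic blocks contribute zero'' and should be made explicit. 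Part~(2) is the same inclusion-plus-dimension-count argument as the paper, with the cosmetic difference that you activate one surviving systematic node at a time while the paper activates all of \(\cU\setminus\cF\) at once and then reads off one block column; both are fine. Also note that the inclusion \(\langle S_{p_t\to\cF}A_{p_t,u_{i_z}}\rangle\subseteq\langle S_{u_{i_z}\to\cF}\rangle\) comes directly from the algebraic identity (after inverting \(M_\cF(:,\cP)\)), not from a dimension count; the dimension count is only needed to upgrade the inclusion to equality, so the ``delicate'' two-sided chase you anticipated is lighter than you expect.
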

   \begin{proof}
Write
\[M_\cF=\begin{bmatrix}
    M_{1,u_{i_{h+1}}}&\ldots&M_{1,u_{i_k}}&M_{1,p_{1}}&\ldots&M_{1,p_{r}}\\
    \vdots&&\vdots&\vdots&&\vdots\\
M_{h,u_{i_{h+1}}}&\ldots&M_{h,u_{i_k}}&M_{h,p_{1}}&\ldots&M_{h,u_{p_r}} 
\end{bmatrix}\]
as an $h\times d$ block matrix, where each block entry \(M_{i,j}\) is of size \(\ell\times \beta\). Let \(M_{\cF}(:, \cU\bl\cF)\) be the submatrix of $M_{\cF}$ consisting of the first \(k-h\) block columns and \(M_{\cF}(:, \cP)\) the submatrix consisting of the last  \(r\) block columns. Then
$M_\cF=[ M_{\cF}(:, \cU\bl\cF)\ M_{\cF}(:, \cP)].$ Similarly write
\begin{align*}    
A=\begin{bmatrix}
A_{p_1,u_{i_{1}}}&\ldots&A_{p_1,u_{i_h}}&A_{p_1,u_{i_{h+1}}}&\ldots&A_{p_1,u_{i_k}}\\
\vdots&\ddots&\vdots&\vdots &&\vdots&\\
A_{p_r,u_{i_{1}}}&\ldots&A_{p_r,u_{i_h}}&A_{p_r,u_{i_{h+1}}}&\ldots&A_{p_r,u_{i_k}}
\end{bmatrix}
=[A(:, \cF)\ A(:, \cU\bl\cF)].
\end{align*}
Recall that
\begin{align*}
\begin{bmatrix}
    C_{p_{1}}\\
    \vdots\\
    C_{p_{r}}
\end{bmatrix}
&=\begin{bmatrix}
A_{p_1,u_{i_{1}}}&\ldots&A_{p_1,u_{i_h}}&A_{p_1,u_{i_{h+1}}}&\ldots&A_{p_1,u_{i_k}}\\
\vdots&\ddots&\vdots&\vdots &&\vdots&\\
A_{p_r,u_{i_{1}}}&\ldots&A_{p_r,u_{i_h}}&A_{p_r,u_{i_{h+1}}}&\ldots&A_{p_r,u_{i_k}}
\end{bmatrix}
\begin{bmatrix}
    C_{u_{i_1}}\\
    \vdots\\
    C_{u_{i_h}}\\
    C_{u_{i_{h+1}}}\\
    \vdots\\
    C_{u_{i_k}}
\end{bmatrix}\\
&=\begin{bmatrix}
A(:, \cF)\ A(:, \cU\bl\cF)
\end{bmatrix}
\begin{bmatrix}
    C_{u_{i_1}}\\
    \vdots\\
    C_{u_{i_h}}\\
    C_{u_{i_{h+1}}}\\
    \vdots\\
    C_{u_{i_k}}
\end{bmatrix}. 
\end{align*}
By (\ref{optimalrepir}) we have
\begin{align*}
    \begin{bmatrix}
    C_{u_{i_1}}\\
    \vdots\\
    C_{u_{i_h}}
\end{bmatrix}
&=
M_\cF
\begin{bmatrix}
  S_{\cU\bl\cF\rightarrow\cF}&\bO\\ 
  \\
  \bO&S_{P\rightarrow\cF}
\end{bmatrix}
\begin{bmatrix}
  \bO&I_{(k-h)\ell}\\
  \\
A(:, \cF)&A(:, \cU \setminus \cF)
\end{bmatrix}
\begin{bmatrix}
    C_{u_{i_1}}\\
    \vdots\\
    C_{u_{i_h}}\\
    C_{u_{i_{h+1}}}\\
    \vdots\\
    C_{u_{i_k}}
\end{bmatrix}  \\
& =
[M_{\cF}(:, \cU\bl\cF)\ M_{\cF}(:, P)]
\begin{bmatrix}
    \mathbf{O}&{S}_{ \cU\bl\cF\rightarrow\cF}\\
    \\
S_{P\rightarrow\cF}A(:, \cF)&S_{P\rightarrow\cF}A(:, \cU \setminus \cF)\\
\end{bmatrix}
\begin{bmatrix}
    C_{u_{i_1}}\\
    \vdots\\
    C_{u_{i_h}}\\
    C_{u_{i_{h+1}}}\\
    \vdots\\
    C_{u_{i_k}}
\end{bmatrix}\\
&= M_{\cF}(:, P)S_{P\rightarrow\cF}A(:, \cF)\begin{bmatrix}
    C_{u_{i_1}}\\
    \vdots\\
    C_{u_{i_h}}\\
    \end{bmatrix}+\left(M_{\cF}(:, \cU\bl\cF){S}_{ \cU\bl\cF\rightarrow\cF}+M_{\cF}(:, P)S_{P\rightarrow\cF}A(:, \cU \setminus \cF)\right)\begin{bmatrix}
    C_{u_{i_{h+1}}}\\
    \vdots\\
    C_{u_{i_k}}
\end{bmatrix}.
\end{align*}
Set \(C_{u_{i_{h+1}}}=\ldots=C_{u_{i_k}}=0\). Then we obtain
   \begin{align*}
       \begin{bmatrix}
    C_{u_{i_1}}\\
    \vdots\\
    C_{u_{i_h}}\\
\end{bmatrix}=M_{\cF}(:, P)S_{P\rightarrow\cF}A(:, \cF)
       \begin{bmatrix}
    C_{u_{i_1}}\\
    \vdots\\
    C_{u_{i_h}}\\
\end{bmatrix}, 
   \end{align*}
   for \([C_{u_{i_1}},\ldots,C_{u_{i_h}}]^T\in \F_q^{h\ell}\). 
   Hence \(M_{\cF}(:, \cP)S_{\cP\rightarrow\cF}A(:, \cF)=I_{h\ell}\). 
   It follows directly that the matrices
   \(S_{p_t\rightarrow \cF}\), \(1 \le t\le r\), have full row rank, and
   the square matrices $M_{\cF}(:, \cP)$ and $S_{\cP\rightarrow\cF}A(:, \cF)$ both have full rank.

 Set \(C_{u_{i_{1}}}=\ldots=C_{u_{i_h}}=0\). Then we obtain 
   \begin{align*}
       \left(M_{\cF}(:, \cU\bl\cF)S_{\cU\bl\cF\rightarrow\cF}+M_{\cF}(:, \cP)S_{\cP\rightarrow\cF}A(:,\cU\bl\cF)\right)\begin{bmatrix}
    C_{u_{i_{h+1}}}\\
    \vdots\\
    C_{u_{i_k}}\\
\end{bmatrix}=0, 
   \end{align*}
   for all  \([C_{u_{i_{h+1}}},\ldots,C_{u_{i_k}}]^T\in \F_q^{(k-h)l}\). Hence
   \begin{align*}
      S_{\cP\rightarrow\cF}A(:,\cU\bl\cF)&=-(M_{\cF}(:, \cP))^{-1}M_{\cF}(:, \cU\bl\cF)S_{\cU\bl\cF\rightarrow\cF}.
   \end{align*}
   Comparing the block columns of the above two matrices, we have
   \begin{align*}
       \begin{bmatrix}
       S_{p_1\rightarrow \cF}A_{p_1,u_{i_{z}}}\\
       \vdots\\
       S_{p_r\rightarrow \cF}A_{p_r,u_{i_{z}}}
   \end{bmatrix}=-(M_{\cF}(:, \cP))^{-1}\begin{bmatrix}
       M_{1,u_{i_z}}\\
       \vdots\\
       M_{h,u_{i_z}}
   \end{bmatrix}S_{u_{i_{z}}\rightarrow \cF}
   \end{align*}
    for $h+1\leq z\leq k$.
  It is not hard to check that \(\langle S_{p_t\rightarrow \cF}A_{p_t,u_{i_z}}\rangle\subseteq\langle S_{u_{i_z}\rightarrow \cF}\rangle\) for \(1\leq t\leq r\), $h+1\leq z\leq k$. 
 As the matrices \(S_{p_t\rightarrow \cF}\) and \(A_{p_t,u_{i_z}}\) both have full rank, we get \(\dim\langle S_{p_t\rightarrow \cF}A_{p_t,u_{i_z}}\rangle=\dim\langle S_{u_{i_z}\rightarrow \cF}\rangle=\beta\).
 Therefore \(\langle S_{p_t\rightarrow \cF}A_{p_t,u_{i_z}}\rangle=\langle S_{u_{i_z}\rightarrow \cF}\rangle\), and 
 the repair matrix \(S_{u_{i_z}\rightarrow \cF}\) has full row rank.
\end{proof}

The following key lemma will be used in latter sections to derive our new lower bounds.
\begin{lemma} \label{key_lem}
Let \(\cF_1=\{u_{i_1},\ldots,u_{i_h}\}\) and \(\cF_2=\{v_{i_1},\ldots,v_{i_h}\}\) be two subsets of \( \cU \).
Then we have
\[\rank \left(S_{P\rightarrow\cF_2}A(:, \cF_1)\right)\le\frac{(h-|\cF_1 \cap \cF_2|)h\ell}{r}+|\cF_1 \cap \cF_2|\ell.\]
Moreover, if \(\cF_1\cap \cF_2=\emptyset\), then the equality holds, that is,
\begin{align*}
\rank \left(S_{P\rightarrow\cF_2}A(:, \cF_1)\right)=\frac{h^2\ell}{r},
\end{align*}
and
         \begin{equation}\label{key_lem_eq1}
            \langle  S_{P\rightarrow\cF_2}A(:, \cF_1)\rangle=\langle
                S_{\cF_1\rightarrow \cF_2}\rangle.
        \end{equation}
\end{lemma}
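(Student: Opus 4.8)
The plan is to reduce the general bound to the disjoint case and then handle the overlap separately. Write $t = |\cF_1\cap\cF_2|$ and split $\cF_1 = \cF_1' \sqcup \cF_1''$, where $\cF_1'' = \cF_1\cap\cF_2$ has size $t$ and $\cF_1' = \cF_1\setminus\cF_2$ has size $h-t$. Then $A(:,\cF_1)$ is the block-column concatenation of $A(:,\cF_1')$ and $A(:,\cF_1'')$, so $S_{P\to\cF_2}A(:,\cF_1)$ is the concatenation of $S_{P\to\cF_2}A(:,\cF_1')$ and $S_{P\to\cF_2}A(:,\cF_1'')$, and the rank is at most the sum of the two ranks. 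The second piece $S_{P\to\cF_2}A(:,\cF_1'')$ has at most $t\ell$ columns, giving the $\,|\cF_1\cap\cF_2|\ell\,$ term. For the first piece, since $\cF_1'\cap\cF_2 = \emptyset$, it suffices to prove the disjoint statement: when $\cG_1,\cG_2\subseteq\cU$ are disjoint of sizes $m$ and $h$ respectively (here I only need $|\cG_2| = h$ to use the repair scheme for $\cF_2$, and $|\cG_1| = h-t$), one has $\rank(S_{P\to\cG_2}A(:,\cG_1)) \le \frac{m h\ell}{r}$ together with the subspace identity when $m=h$.

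For the disjoint case, I would use \cref{algebraic_setting} applied to the failed set $\cF_2$. Part (1) there gives that $S_{P\to\cF_2}A(:,\cF_2)$ is a full-rank $h\ell\times h\ell$ square matrix, and part (2) gives $\langle S_{p_t\to\cF_2}A_{p_t,u_j}\rangle = \langle S_{u_j\to\cF_2}\rangle$ for each parity index $t$ and each $u_j \notin \cF_2$ — in particular for every $u_j\in\cG_1$. Stacking over the $r$ parity nodes, the block matrix $S_{P\to\cF_2}A(:,\{u_j\})$ (size $r\beta\times\ell$) has its $t$-th block row with row space equal to $\langle S_{u_j\to\cF_2}\rangle$, which has dimension $\beta = h\ell/r$. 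Hence $\rank(S_{P\to\cF_2}A(:,\{u_j\})) \le r\cdot$ — no, more carefully: the column space is what matters. Let me instead bound via columns: $S_{P\to\cF_2}A(:,\{u_j\})$ has $\ell$ columns but I want a bound of $h\ell/r = \beta$. The right statement is that $S_{P\to\cF_2}A(:,\{u_j\}) = S_{P\to\cF_2}A(:,\cF_2)\cdot X_j$ for the matrix $X_j = (S_{P\to\cF_2}A(:,\cF_2))^{-1}S_{P\to\cF_2}A(:,\{u_j\})$, and then reading the proof of \cref{algebraic_setting} (the displayed identity $S_{\cP\to\cF_2}A(:,\cU\setminus\cF_2) = -(M_{\cF_2}(:,\cP))^{-1}M_{\cF_2}(:,\cU\setminus\cF_2)S_{\cU\setminus\cF_2\to\cF_2}$) shows that the column space of $S_{P\to\cF_2}A(:,\{u_j\})$ is contained in the column space of $(M_{\cF_2}(:,\cP))^{-1}$ composed with an $h\ell\times\beta$ matrix, hence has dimension at most $\beta = h\ell/r$. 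Summing over the $h-t$ nodes of $\cG_1$ gives $\rank \le (h-t)\beta = (h-t)h\ell/r$, which combined with the $t\ell$ term is exactly the claimed bound.

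For the equality and the subspace identity \eqref{key_lem_eq1} when $\cF_1\cap\cF_2 = \emptyset$: the upper bound gives $\rank(S_{P\to\cF_2}A(:,\cF_1)) \le h^2\ell/r = h\beta$. For the reverse, note $\langle S_{P\to\cF_2}A(:,\cF_1)\rangle \supseteq$ each block-column-restricted piece; using the identity above, $S_{P\to\cF_2}A(:,u_{i_z}) = -(M_{\cF_2}(:,\cP))^{-1}[M_{1,u_{i_z}};\dots;M_{h,u_{i_z}}]S_{u_{i_z}\to\cF_2}$, and since $S_{u_{i_z}\to\cF_2}$ has full row rank $\beta$ and $(M_{\cF_2}(:,\cP))^{-1}$ is invertible, the column space of this product has dimension exactly $\beta$ provided the middle $h\ell\times\beta$ matrix has rank $\beta$. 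Collecting over $z$, the total column space equals $(M_{\cF_2}(:,\cP))^{-1}$ applied to $\langle M_{\cF_2}(:,\cF_1)\,S_{\cF_1\to\cF_2}\rangle$; since the left side already spans a space we can compare dimensions, and because $M_{\cF_2}(:,\cP)$ is a bijection this forces $\langle S_{P\to\cF_2}A(:,\cF_1)\rangle = (M_{\cF_2}(:,\cP))^{-1}\langle S_{\cF_1\to\cF_2}\rangle'$ — here I would instead argue directly that after the invertible change of basis by $M_{\cF_2}(:,\cP)$, the relevant space is $\langle S_{\cF_1\to\cF_2}\rangle$ up to the block structure, matching \eqref{key_lem_eq1} and giving dimension exactly $h\beta = h^2\ell/r$.

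\emph{Main obstacle.} The delicate point is the disjoint-case subspace identity \eqref{key_lem_eq1}: getting the column space of $S_{P\to\cF_2}A(:,\cF_1)$ to equal $\langle S_{\cF_1\to\cF_2}\rangle$ on the nose (not merely up to an invertible transformation) requires tracking the block structure of $M_{\cF_2}$ carefully and verifying that the intermediate matrix $M_{\cF_2}(:,\cF_1)$ interacts with $S_{\cF_1\to\cF_2} = \di(S_{u_{i_1}\to\cF_2},\dots,S_{u_{i_h}\to\cF_2})$ so that no rank is lost — equivalently, that the bound $h\beta$ is attained. I expect this to follow from the full-rank conclusions of \cref{algebraic_setting} together with the MDS property (every block submatrix of $A$ is invertible), but it is the step that needs the most care.
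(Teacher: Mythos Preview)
Your decomposition $\cF_1=\cF_1'\sqcup\cF_1''$ and the trivial $t\ell$ bound on the overlap block is the same as the paper's. The difficulty you run into is self-inflicted: you repeatedly switch to column spaces, but in this paper $\langle B\rangle$ denotes the \emph{row} space. Notice that $S_{P\to\cF_2}A(:,\cF_1)$ is $r\beta\times h\ell$ while $S_{\cF_1\to\cF_2}$ is $h\beta\times h\ell$; their column spaces live in $\F^{r\beta}$ and $\F^{h\beta}$ respectively and cannot possibly be equal, so \eqref{key_lem_eq1} only makes sense as a statement about row spaces in $\F^{h\ell}$.

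Once you keep everything in row-space language, the argument you abandoned works immediately. For each parity $p_t$ and each $u\in\cF_1\setminus\cF_2$, part 2) of \cref{algebraic_setting} says $\langle S_{p_t\to\cF_2}A_{p_t,u}\rangle=\langle S_{u\to\cF_2}\rangle$. Hence every row of $S_{P\to\cF_2}A(:,\cF_1')$, when restricted to the $z$-th block of $\ell$ coordinates, lies in $\langle S_{u_{i_z}\to\cF_2}\rangle$; consequently the whole row space sits inside $\bigoplus_z \langle S_{u_{i_z}\to\cF_2}\rangle=\langle S_{\cF_1'\to\cF_2}\rangle$, which has dimension $(h-t)\beta$. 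This is exactly the paper's argument, and in the disjoint case $t=0$ it already gives the containment $\langle S_{P\to\cF_2}A(:,\cF_1)\rangle\subseteq\langle S_{\cF_1\to\cF_2}\rangle$ in \eqref{key_lem_eq1}.

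Your ``main obstacle'' --- the reverse inequality --- is in fact the easy step, and there is no need to bring $M_{\cF_2}$ back into play. The paper simply looks at the first $h$ block rows of $S_{P\to\cF_2}A(:,\cF_1)$: this is $\di(S_{p_1\to\cF_2},\dots,S_{p_h\to\cF_2})$ times the $h\ell\times h\ell$ block submatrix of $A$ indexed by $\{p_1,\dots,p_h\}$ and $\cF_1$. The latter is invertible by the MDS property, and the former has full row rank $h\beta$ by part 3) of \cref{algebraic_setting}, so $\rank(S_{P\to\cF_2}A(:,\cF_1))\ge h\beta$. Combined with the containment above and $\dim\langle S_{\cF_1\to\cF_2}\rangle=h\beta$, equality of row spaces follows by dimension. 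Your worry about whether ``$M_{\cF_2}(:,\cF_1)$ interacts with $S_{\cF_1\to\cF_2}$ so that no rank is lost'' never arises.
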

\begin{proof}
Write $t=|\cF_1 \cap \cF_2|.$
Without loss of generality, we can assume that $\cF_1 \cap \cF_2=\{u_{i_{h-t+1}},\ldots,u_{i_h}\}$ and \(u_{i_z}=v_{i_z}\) for $h-t+1\leq z\leq h$. Write
  \[S_{P\rightarrow\cF_2}A(:, \cF_1)=\begin{bmatrix}
      S_{P\rightarrow\cF_2}A(:, \cF_1\bl (\cF_1\cap\cF_2))&
  S_{P\rightarrow\cF_2}A(:,\cF_1\cap\cF_2)
  \end{bmatrix}
  ,\]
  and note that
  \begin{align*}
    \left\langle S_{P\rightarrow\cF_2}A(:, \cF_1)\right\rangle
    \subseteq
    \left\langle\begin{bmatrix}
    S_{\cF_1\bl (\cF_1\cap\cF_2)\rightarrow\cF_2}&\bO\\[8pt]
    S_{P\rightarrow\cF_2}A(:, \cF_1\bl (\cF_1\cap\cF_2))&S_{P\rightarrow\cF_2}A(:,\cF_1\cap\cF_2)\\
\end{bmatrix}\right\rangle. 
 \end{align*}
  By 2) of  \cref{algebraic_setting} we have 
  \(\langle S_{p_{t}\rightarrow\cF_2}A_{p_t,u_{i_z}}\rangle=\langle S_{u_{i_z}\rightarrow\cF_2}\rangle\), for \(1\le t\le r\) and \(1\le z\le h-t\). 
  Using elementary row operations, it is not hard to check that
  \begin{align*}
    \left\langle\begin{bmatrix}
    S_{\cF_1\bl (\cF_1\cap\cF_2)\rightarrow\cF_2}&\bO\\[8pt]
    S_{P\rightarrow\cF_2}A(:, \cF_1\bl (\cF_1\cap\cF_2))&S_{P\rightarrow\cF_2}A(:,\cF_1\cap\cF_2)\\
\end{bmatrix}\right\rangle
=
    \left\langle\begin{bmatrix}
    S_{\cF_1\bl (\cF_1\cap\cF_2)\rightarrow\cF_2}&\bO\\[8pt]
    \bO&S_{P\rightarrow\cF_2}A(:,\cF_1\cap\cF_2)\\
\end{bmatrix}\right\rangle. 
  \end{align*}
Therefore
 \begin{align}\label{key_lem_2}
    \langle S_{P\rightarrow\cF_2}A(:, \cF_1)\rangle\subseteq
    \left\langle\begin{bmatrix}
    S_{\cF_1\bl (\cF_1\cap\cF_2)\rightarrow\cF_2}&\bO\\[8pt]
    \bO&S_{P\rightarrow\cF_2}A(:,\cF_1\cap\cF_2)\\
\end{bmatrix}\right\rangle. 
 \end{align}
By 3) of  \cref{algebraic_setting} we have $\rank(S_{\cF_1\bl (\cF_1\cap\cF_2)\rightarrow\cF_2})=\frac{(h-t)h\ell}{r}$.
On the other hand, $\rank(S_{P\rightarrow\cF_2}A(:,\cF_1\cap\cF_2))$ is at most $t\ell$, the number of its columns.
Hence 
 \begin{align*}
     \rank (S_{P\rightarrow\cF_2}A(:, \cF_1))&\le \rank(S_{\cF_1\bl (\cF_1\cap\cF_2)\rightarrow\cF_2})+\rank(S_{P\rightarrow\cF_2}A(:,\cF_1\cap\cF_2))\\
     &\le \frac{(h-t)h\ell}{r}+t\ell. 
 \end{align*}

Now if \(\cF_1\cap\cF_2=\emptyset\), then~\eqref{key_lem_2} becomes
\[
\langle S_{P\rightarrow\cF_2}A(:, \cF_1)\rangle\subseteq \langle S_{\cF_1\rightarrow \cF_2}\rangle. 
\]
 By  3) of  \cref{algebraic_setting} we see that the matrix \(S_{P\rightarrow\cF_2}\) has full row rank,
 and $\rank(S_{\cF_1\rightarrow \cF_2})=\frac{h^2\ell}{r}$.
 By the MDS property of $\C$ we know that the first \(h\) block rows of matrix \(A(:, \cF_1)\) is invertible. 
 Then 
 \[\rank \left(S_{P\rightarrow\cF_2}A(:, \cF_1)\right)\ge\rank\left(S_{\{p_1,\ldots,p_h\}\rightarrow\cF_2}\right)=\frac{h^2\ell}{r}.\]
It follows directly that 
            $\langle  S_{P\rightarrow\cF_2}A(:, \cF_1)\rangle=\langle
                S_{\cF_1\rightarrow \cF_2}\rangle.$

\end{proof}

\section{Lower Bounds with Constant Repair Matrices}\label{sec_constant}
In this section,  we derive lower bounds on the sub-packetization of optimal-access \((h,d)\)-centralized MSR codes with constant repair matrices, namely,  \(S^{\cR}_{j\rightarrow \cF}=S_{\cF}\) for \(\cF\subseteq [n]\), \(\cR\subseteq [n]\bl\cF\) and \(j\in\cR\). 
The proof method is first employed by Tamo et al.~\cite{tamo2014}, and later developed by Balaj et al.~\cite{balaji2022}.
Before we proceed to the proofs we need the following result.
\begin{lemma}\label{lem_constant1}
  Suppose that $\C$ has  $(h,d)$ optimal repair schemes with constant repair matrices for any $h$ failed systematic nodes.
  Let $1\leq t\leq \lfloor\frac{k}{h}\rfloor$, and let \(\cF_1, \dots, \cF_t \subseteq \cU\) be any $t$ pairwise disjoint subsets of size $h$ of $k$ systematic nodes. 
  Then we have 
    \[
    \dim\left(\bigcap\limits_{i=1}^{t} \langle S_{\cF_i} \rangle\right) \leq \left(\frac{h}{d-k+h}\right)^t\ell.
    \]
\end{lemma}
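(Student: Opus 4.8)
The plan is to prove the bound by induction on $t$, following the template of the single-node argument in Balaji et al.~\cite{balaji2022} but carrying the factor $\frac{h}{d-k+h}$ through each step. The base case $t=1$ is immediate: by~\cref{algebraic_setting}, part 3) (applied to a repair scheme for $\cF_1$ using all remaining nodes), the repair matrix $S_{\cF_1}$ has full row rank $\beta = \frac{h\ell}{d-k+h}$, so $\dim\langle S_{\cF_1}\rangle = \frac{h}{d-k+h}\ell$, which is exactly the claimed bound with $t=1$. Actually I should be careful: with constant repair matrices the same matrix $S_{\cF_1}$ is used for every helper set $\cR$, and in particular $\langle S_{\cF_1}\rangle = \langle S_{p\to\cF_1}\rangle$ for each parity node $p$; the full-rank claim is exactly what part 3) gives. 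So the base case holds.

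For the inductive step, suppose the bound holds for $t-1$. Let $\cF_1,\dots,\cF_t$ be pairwise disjoint size-$h$ subsets of $\cU$, and set $W = \bigcap_{i=1}^{t-1}\langle S_{\cF_i}\rangle$, so $\dim W \le \left(\frac{h}{d-k+h}\right)^{t-1}\ell$ by induction. The idea is to relate $W \cap \langle S_{\cF_t}\rangle$ to a "smaller" intersection inside the parity structure. The key mechanism, exactly as in the $h=1$ case, is: apply the optimal-access property. Optimal access means each row of $S_{\cF_i}$ is supported on a single coordinate (the repair matrices are "thin"), so the spaces $\langle S_{\cF_i}\rangle$ behave like coordinate subspaces and intersections of them are controlled coordinate by coordinate. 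Concretely, for each parity node $p_j$, part 2) of~\cref{algebraic_setting} gives $\langle S_{\cF_i} A_{p_j,\cF_i'}\rangle = \langle S_{\cF_i}\rangle$ for the relevant systematic columns, and one pushes the intersection $W\cap\langle S_{\cF_t}\rangle$ through multiplication by the invertible blocks $A_{p_j,u}$. The optimal-access (thinness) hypothesis is what lets one argue that on the union of coordinate-supports of $S_{\cF_t}$, the space $W$ already ``uses up'' at most a $\frac{h}{d-k+h}$ fraction, yielding
\[
\dim\left(\bigcap_{i=1}^{t}\langle S_{\cF_i}\rangle\right) \le \frac{h}{d-k+h}\dim\left(\bigcap_{i=1}^{t-1}\langle S_{\cF_i}\rangle\right) \le \left(\frac{h}{d-k+h}\right)^{t}\ell.
\]
The condition $t \le \lfloor k/h\rfloor$ is needed precisely so that $\cF_1,\dots,\cF_t$ together with at least one more systematic node can coexist among the $k$ systematic nodes, giving room to invoke a repair scheme that ``sees'' all of them; equivalently it guarantees the MDS block submatrices used in the reduction are square and invertible.

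The main obstacle, and the step I would spend the most care on, is making the ``coordinate-by-coordinate'' reduction rigorous in the multi-node setting: when $h>1$, $S_{\cF_t}$ has $\beta = \frac{h\ell}{d-k+h}$ rows rather than $\ell/(r{+}1{-}k{+}\cdots)$, and the supports of its rows need not be disjoint in the naive way. I expect the right move is to fix an optimal repair scheme for $\cF_t$ using a helper set $\cR$ that contains $\cF_1,\dots,\cF_{t-1}$ (possible since $th \le k < n-h$), express $C_{\cF_t}$ via $M_{\cF_t}$ from the transmitted symbols, and then restrict a vector $x$ in the intersection: $x \in \langle S_{\cF_i}\rangle$ for all $i\le t$ forces, after multiplying by the appropriate invertible $A$-blocks and using part 2) of~\cref{algebraic_setting}, that $x$ lies in a subspace of the form $S_{\cF_t}(\text{something})$ where the ``something'' is an intersection of $t-1$ of the $\langle S_{\cF_i}\rangle$'s pulled back through $S_{\cF_t}$. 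Optimal access ensures this pullback is just a coordinate projection, so its dimension is at most $\frac{\beta}{\ell} = \frac{h}{d-k+h}$ times $\dim\bigcap_{i=1}^{t-1}\langle S_{\cF_i}\rangle$. Verifying that the pullback map really is (block-)diagonal/coordinate-like under optimal access, and that no dimension is lost or double-counted across the $h$ failed nodes, is the crux.
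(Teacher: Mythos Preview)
Your proposal has a genuine gap at the inductive step, and it stems from importing a hypothesis the lemma does not have. The statement of \cref{lem_constant1} assumes only that $\C$ has $(h,d)$ \emph{optimal repair schemes} (i.e.\ bandwidth-optimal, as in \cref{def_MSR2}) with constant repair matrices; it does \emph{not} assume optimal access. Optimal access is invoked only later, in the bipartite counting of \cref{thm_constant1}. So your entire ``coordinate-by-coordinate'' reduction, which rests on the rows of $S_{\cF_t}$ being supported on a fixed set of $\beta$ coordinates, is not available here.

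Even if one grants optimal access, the step
\[
\dim\Bigl(\bigcap_{i=1}^{t}\langle S_{\cF_i}\rangle\Bigr)\ \le\ \frac{h}{d-k+h}\,\dim\Bigl(\bigcap_{i=1}^{t-1}\langle S_{\cF_i}\rangle\Bigr)
\]
is not justified by what you wrote. Under optimal access each $\langle S_{\cF_i}\rangle$ is indeed a coordinate subspace of dimension $\beta$, but nothing you have said forces the support of $S_{\cF_t}$ to meet $\bigcap_{i<t}\langle S_{\cF_i}\rangle$ in at most a $\beta/\ell$ fraction of its coordinates. The ``pullback is a coordinate projection'' heuristic does not by itself produce that ratio; some structural input relating the different $S_{\cF_i}$ is required, and part~2) of \cref{algebraic_setting} alone (which only says each $\langle S_{\cF_i}\rangle$ is invariant under right multiplication by a single $A_{p,u}$) is not enough.

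The paper's mechanism is different and purely linear-algebraic. With $d=n-h$ (so $r=d-k+h$), let $S$ have row space $\bigcap_{i=1}^{t+1}\langle S_{\cF_i}\rangle$ and write $S=T_iS_{\cF_i}$. One computes $(I_r\otimes S)\,A(:,\cF_1)=(I_r\otimes T_i)(I_r\otimes S_{\cF_i})A(:,\cF_1)$. For $i=1$, part~1) of \cref{algebraic_setting} says $(I_r\otimes S_{\cF_1})A(:,\cF_1)$ is invertible, so the rank equals $r\cdot\dim\langle S\rangle$. For $i\ge 2$, the key \cref{key_lem} (the equality case, since $\cF_1\cap\cF_i=\emptyset$) gives $\langle (I_r\otimes S_{\cF_i})A(:,\cF_1)\rangle=\langle I_h\otimes S_{\cF_i}\rangle$, so the rank is at most $h\cdot\dim\bigl(\bigcap_{i\ge 2}\langle S_{\cF_i}\rangle\bigr)$. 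Comparing the two gives exactly the factor $h/r$. General $d$ is then handled by puncturing $n-d-h$ parity nodes. Note that \cref{key_lem}, not part~2) of \cref{algebraic_setting}, is the ingredient doing the work; your sketch never invokes it.

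Finally, your explanation of the constraint $t\le\lfloor k/h\rfloor$ is off: it is needed simply so that $t$ pairwise disjoint size-$h$ subsets fit inside $\cU$, not to leave room for an extra systematic node.
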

\begin{proof}
We first assume that \(d = n - h\) and prove the result by induction on \(t\). 
For $t=1$, by 3) of \cref{algebraic_setting}, we have \( \dim(\langle S_{\cF_1} \rangle) = \frac{h\ell}{r}\). 
Assume the result holds for \(t\), and now we will prove it also holds for \(t+1\).
Fix a basis for the subspace \(\bigcap\limits_{i=1}^{t+1}\langle S_{\cF_{i}}\rangle\),
and set $S$ to be the matrix whose rows are basis vectors.
Then \( \langle S \rangle =\bigcap\limits_{i=1}^{t+1}\langle S_{\cF_{i}}\rangle\), 
and in particular \(\langle S\rangle\subseteq \langle S_{\cF_i}\rangle\) for \(i\in [t+1]\). 
Hence we can find matrices \(T_i\), whose rank is equal to $\dim(\langle S\rangle)$,  such that \(S=T_iS_{\cF_i}\).
It follows that
\begin{align*}
   (I_{r}\otimes S)A{(:,\cF_1)}&=(I_{r}\otimes T_i)(I_{r}\otimes S_{\cF_i})A{(:,\cF_1)}, \;\text{for}\;i\in[t+1].
\end{align*}
 If $i=1$, then by 1) of \cref{algebraic_setting} we see that  the matrix \((I_{r}\otimes S_{\cF_1})A{(:,\cF_1)}\) is invertible. 
Then
\begin{align}\label{lem_constant1_1}
  \dim(\langle (I_{r}\otimes S)A{(:,\cF_1)}\rangle)=\dim(\langle I_{r}\otimes T_1\rangle)=r\dim(\langle S\rangle).  
\end{align}
On the other hand, for \(2\leq i\leq t+1\)  we have 
$$\langle (I_{r}\otimes S)A{(:,\cF_1)}\rangle\subseteq \langle (I_{r}\otimes S_{\cF_i})A{(:,\cF_1)}\rangle=\langle I_h\otimes S_{\cF_i} \rangle.$$ 
Here the latter equaltiy follows from (\ref{key_lem_eq1}) of \cref{key_lem}. 
 Then
\begin{align*}
    \langle (I_{r}\otimes S)A{(:,\cF_1)}\rangle \subseteq\mathop{\bigcap}_{i=2}^{t+1}\langle (I_{r}\otimes S_{\cF_i})A{(:,\cF_1)}\rangle=\mathop{\bigcap}_{i=2}^{t+1}\langle I_h\otimes S_{\cF_i} \rangle.
\end{align*}
By assumption we have
\begin{align}
\begin{split}
\label{lem_constant1_2}
\dim\left(\langle (I_{r}\otimes S)A{(:,\cF_1)}\rangle\right)
&\le \dim\left(\mathop{\bigcap}_{i=2}^{t+1}\langle I_h\otimes S_{\cF_i} \rangle\right)\\
&= h\cdot\dim \left(\mathop{\bigcap}\limits_{i=2}^{t+1}\langle S_{\cF_i}\rangle\right)\\
&\le h\left(\frac{h}{r}\right)^{t}\ell.
\end{split}
\end{align}
By (\ref{lem_constant1_1}) and (\ref{lem_constant1_2}), we can compute that
\begin{align*}
    \dim(\langle S\rangle) \leq \left(\frac{h}{r}\right)^{t+1}\ell.
\end{align*}
This concludes our proof for the case \(d=n-h\).

For general \(d\), we first puncture any $n-d-h$ parity nodes from $\C$ and get a new code $\C'$ with $n'=d+h$ and $r'=d-k+h$.
Applying the above argument to $\C'$ we get 
$$\dim\left(\bigcap\limits_{i=1}^{t} \langle S_{\cF_i} \rangle\right) \leq \left(\frac{h}{d-k+h}\right)^t\ell.$$
\end{proof}

\begin{proof}[Proof of Theorem \ref{thm_constant1}]\label{proof_constant1}
  Let \(\{e_j : j \in [\ell]\}\) be the standard basis of \(\F_q^\ell\), 
  where $e_j$ denotes the vector with a $1$ in the $j$th coordinate and $0$'s elsewhere.
  Use \(\cU=\{u_{1},\ldots,u_{k}\} \) to denote the $k$ systematic nodes, and 
  set \( \cF_i=\{u_{(i-1)h+1},\ldots,u_{ih}\}\) for $1\leq i\leq \lfloor\frac{k}{h}\rfloor$.
  Define a bipartite graph where one set of vertices is the standard basis vectors 
\(\{e_j : j \in [\ell]\}\), and the other set of vertices is the constant repair 
subspaces \(\langle S_{\mathcal{F}_i} \rangle\) for \(1 \leq i \leq \lfloor \frac{k}{h} \rfloor\). 
An edge exists between a vertex \(e_j\) and a subspace \(\langle S_{\mathcal{F}_i} \rangle\) if and only if 
\(e_j \in \langle S_{\mathcal{F}_i} \rangle\). 

Now, count the total number of edges in this bipartite graph using two different methods.
From the optimal-access property of the repair schemes we see that 
  there are exactly \(\frac{h\ell}{d-k+h}\) nonzero columns in the repair matrix $S_{\cF_i}$.
  Hence the degree of each repairing subspace in the graph is \(\frac{h\ell}{d-k+h}\). 
  Thus, the total number of edges in the graph is \(\lfloor\frac{k}{h}\rfloor\cdot\frac{h\ell}{d-k+h}\).  
  On the other hand, let \(\mathcal{N}_j=\{i:e_j\in\langle S_{\cF_i}\rangle\}\) be the index set of repairing subspaces which contain \(e_j\). 
  Then by \cref{lem_constant1},
     \begin{align}\label{numberN_1}
         1\leq\dim\left(\bigcap_{i\in \mathcal{N}_j }\langle S_{\cF_i}\rangle\right)\leq \left(\frac{h}{d-k+h}\right)^{|\mathcal{N}_j|}\ell.
     \end{align}
  It follows that the degree of each standard basis vector in the graph satisfies 
  \begin{align}\label{numberN_2}
     |\mathcal{N}_j|\leq \left\lfloor\log_{\frac{d-k+h}{h}}\ell\right\rfloor.
  \end{align}
  Hence there are at most \(\ell\cdot\left\lfloor\log_{\frac{d-k+h}{h}}\ell\right\rfloor\) edges in the graph.
  In summary we have
  \begin{align*}
     \left\lfloor\frac{k}{h}\right\rfloor\cdot\frac{h\ell}{d-k+h}\le\ell\cdot \left\lfloor\log_{\frac{d-k+h}{h}}\ell\right\rfloor,
  \end{align*}
  and 
  \begin{align*}
  \ell\ge\left(\frac{d-k+h}{h}\right)^{\left\lceil\left\lfloor\frac{k}{h}\right\rfloor\frac{h}{d-k+h}\right\rceil}.
  \end{align*}
  \end{proof}


\begin{proof}[Proof of Theorem \ref{thm_constant2}]
  If $\ell\geq \left(\frac{d-k+h}{h}\right)^{\left\lfloor\frac{k}{h}\right\rfloor},$ then the conclusion holds.
  Hence we assume that $\ell<\left(\frac{d-k+h}{h}\right)^{\left\lfloor\frac{k}{h}\right\rfloor}$.
  Set \( \cF_i=\{{(i-1)h+1},\ldots,{ih}\}\) for $1\leq i\leq \left\lfloor\frac{n}{h}\right\rfloor$.
  Define a bipartite graph where one set of vertices is the standard basis vectors 
\(\{e_j : j \in [\ell]\}\), and the other set of vertices is the constant repair 
subspaces $\langle S_{\cF_i}\rangle ,1\leq i\leq \left\lfloor\frac{n}{h}\right\rfloor$. 
An edge exists between a vertex \(e_j\) and a subspace \(\langle S_{\cF_i}\rangle\) if and only if \(e_j\in \langle S_{\cF_i}\rangle\). 

Count the total number of edges in this bipartite graph using two different methods.
  Similar to the proof of Theorem \ref{thm_constant1}, there are exactly\(\left\lfloor\frac{n}{h}\right\rfloor\cdot\frac{h\ell}{d-k+h}\) edges in the graph by optimal-access property. 
  On the other hand, we first prove that for all \(\cE\subset \{1,\ldots,\lfloor\frac{n}{h}\rfloor\}\),
  \begin{align}\label{proof_thm_constant2_1}
     \dim\left(\mathop{\bigcap}\limits_{i\in \cE}\langle S_{ \cF_i}\rangle\right)\leq \left(\frac{h}{d-k+h}\right)^{|\cE|}. 
  \end{align}
    For $|\cE|\leq \lfloor\frac{k}{h}\rfloor$, we can always choose a collection of $k$ systematic nodes which contains $\bigcup\limits_{i\in \cE} \cF_i$.  Then the result follows directly from Lemma~\ref{lem_constant1}.
  For \(\left\lfloor\frac{k}{h}\right\rfloor<|\cE|\le \left\lfloor\frac{n}{h}\right\rfloor\), we choose any subset $\mathcal{H}$ of $\cE$ with size $\lfloor\frac{k}{h}\rfloor$, and get 
     \begin{align*}
       \dim\left(\bigcap_{i\in \cE} \langle S_{\cF_i} \rangle\right)
       &\leq \dim\left(\bigcap_{i\in\mathcal{H}} \langle S_{\cF_i} \rangle\right) \\
       &\leq \left(\frac{h}{d-k+h}\right)^{\left\lfloor\frac{k}{h}\right\rfloor}\ell\\
       &<1.
    \end{align*}
  Therefore (\ref{proof_thm_constant2_1}) holds for \(1\leq |\cE|\leq\left\lfloor\frac{n}{h}\right\rfloor\). 
    Just as (\ref{numberN_1}) and (\ref{numberN_2}), there are at most \(\ell\cdot\left\lfloor\log_{\frac{d-k+h}{h}}\ell\right\rfloor\) edges in the graph.
   We have
    \begin{align*}
     \lfloor\frac{n}{h}\rfloor\cdot\frac{h\ell}{d-k+h}\le\ell\cdot \left\lfloor\log_{\frac{d-k+h}{h}}\ell\right\rfloor,
    \end{align*}
   and 
    \begin{align*}
    \ell\ge\left(\frac{d-k+h}{h}\right)^{\left\lceil\left\lfloor\frac{n}{h}\right\rfloor\frac{h}{d-k+h}\right\rceil}.
    \end{align*}
   In summary we have
     \[l\ge\min\left\{\left(\frac{d-k+h}{h}\right)^{\left\lfloor\frac{k}{h}\right\rfloor},\left(\frac{d-k+h}{h}\right)^{\left\lceil\left\lfloor\frac{n}{h}\right\rfloor\frac{h}{d-k+h}\right\rceil}\right\}.\]
\end{proof}


\section{A Lower Bound with Repair Matrices Independent of Identity of Remaining  Helper Nodes}\label{non-constant}
In this section we consider optimal-access \((h,d)\) centralized MSR codes with repair matrices 
independent of the choice of the remaining \((d-1)\) helper nodes, namely, \(S_{j\rightarrow\cF}^{\cR}=S_{j\rightarrow\cF}\), for \(\cF\subseteq [n]\), \(\cR\subseteq [n]\bl\cF\) and \(j\in\cR\).

\begin{lemma}\label{lem_nonconstant1}
Suppose that $\C$ is an optimal-access \((h,n-h)\)-centralized MSR code. 
Let $1\leq t\leq \left\lfloor\frac{k-1}{h}\right\rfloor$, and let \(\cF_1, \dots, \cF_t \subseteq [n]\) be $t$ pairwise disjoint subsets of $[n]$ of size \(h\). 
Then for any \(j \in [n] \setminus \bigcup\limits_{i=1}^{t} \cF_i \), we have
  \begin{align}\label{lem_nonconstant1_1}
        \dim\left(\bigcap_{i=1}^{t} \langle S_{j\rightarrow\cF_i} \rangle\right) \leq \left(\frac{h}{r}\right)^t\ell.
  \end{align}
\end{lemma}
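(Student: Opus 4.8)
The plan is to mimic the inductive argument in \cref{lem_constant1}, but now tracking the dependence on the fixed helper node $j$ and exploiting \cref{key_lem} in the non-constant setting. First I would set up the induction on $t$. For the base case $t=1$, part 3) of \cref{algebraic_setting} gives $\dim\langle S_{j\rightarrow\cF_1}\rangle = \frac{h\ell}{r}$, so \eqref{lem_nonconstant1_1} holds. For the inductive step, assuming the bound for $t$, fix pairwise disjoint $\cF_1,\dots,\cF_{t+1}$ and a node $j\notin\bigcup_{i=1}^{t+1}\cF_i$. Let $S$ be a matrix whose rows form a basis of $\bigcap_{i=1}^{t+1}\langle S_{j\rightarrow\cF_i}\rangle$, so $\langle S\rangle\subseteq\langle S_{j\rightarrow\cF_i}\rangle$ for each $i$, and write $S = T_i S_{j\rightarrow\cF_i}$ for suitable full-rank-on-rows matrices $T_i$.

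The key point is to transfer this intersection through multiplication by a block of the parity matrix $A(:,\cF_1)$. Here I have to be a little careful about which parity nodes are available: since $j$ is a single node and the $\cF_i$ are disjoint sets of size $h$, the set $\cP' := \cP\setminus(\cF_1\cup\dots\cup\cF_{t+1}\cup\{j\})$ still contains at least $r - (t+1)h - 1$ parity nodes, and since $t+1\le\lfloor\frac{k-1}{h}\rfloor+1$ this is enough to form the blocks I need — in fact I only need that $\cF_1$ can be repaired with a helper set containing enough parity nodes. Concretely, I would consider $(I_r\otimes S_{j\rightarrow\cF_i})$ acting appropriately, and use part 1) of \cref{algebraic_setting} for $\cF_1$ (the matrix $S_{\cP\rightarrow\cF_1}A(:,\cF_1)$ has full rank) together with \eqref{key_lem_eq1} of \cref{key_lem}, which says $\langle S_{\cP\rightarrow\cF_i}A(:,\cF_1)\rangle = \langle S_{\cF_1\rightarrow\cF_i}\rangle$ since $\cF_1\cap\cF_i=\emptyset$ for $i\ge 2$. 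The idea is that applying the invertible transformation coming from $\cF_1$ to the intersection $\bigcap_{i}\langle S_{j\rightarrow\cF_i}\rangle$ multiplies its dimension by $r$, while on the other side, for $i\ge 2$ the image lands inside $\langle I_h\otimes S_{\cF_1\rightarrow\cF_i}\rangle$ (or the analogous block space), whose dimension is $h$ times the dimension of the $t$-fold intersection $\bigcap_{i=2}^{t+1}\langle S_{\cF_1\rightarrow\cF_i}\rangle$. Applying the induction hypothesis with the role of the fixed node now played by $\cF_1$ (one of its nodes, or more precisely using that \cref{key_lem}-type relations hold for repair of $\cF_i$ from the set $\cF_1$) bounds this by $h(\frac{h}{r})^t\ell$, and combining the two sides gives $r\dim\langle S\rangle\le h(\frac{h}{r})^t\ell$, i.e.\ $\dim\langle S\rangle\le(\frac{h}{r})^{t+1}\ell$, completing the induction.

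The main obstacle, and the step I would spend the most care on, is exactly this transfer of the fixed-helper structure: in \cref{lem_constant1} the repair matrices are constant, so $S_{j\rightarrow\cF_i}$ does not actually depend on $j$, and the induction hypothesis applies verbatim to $\bigcap_{i=2}^{t+1}\langle S_{\cF_i}\rangle$. Here the matrices genuinely depend on the helper node, so after multiplying by $A(:,\cF_1)$ I need the image to sit inside an intersection of spaces of the form $\langle S_{\cF_1\rightarrow\cF_i}\rangle$ — that is, repair subspaces of $\cF_i$ seen from the helper set $\cF_1$ — and then I must invoke the induction hypothesis \eqref{lem_nonconstant1_1} with "$j$" instantiated by (a node of) $\cF_1$ and the $t$ disjoint sets being $\cF_2,\dots,\cF_{t+1}$, all of which are disjoint from $\cF_1$. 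One subtlety: \eqref{lem_nonconstant1_1} as stated fixes a single node $j$, whereas $\langle S_{\cF_1\rightarrow\cF_i}\rangle = \langle\di(S_{u\rightarrow\cF_i}: u\in\cF_1)\rangle$ is a block-diagonal object over the $h$ nodes of $\cF_1$; so I would first reduce to one block, noting $\dim\bigcap_{i=2}^{t+1}\langle I_h\otimes S_{\cF_1\rightarrow\cF_i}\rangle$ relates (by a direct-sum / tensor argument over the $h$ coordinates, using that the relevant spaces respect the block structure via \cref{key_lem}) to $h$ copies of $\dim\bigcap_{i=2}^{t+1}\langle S_{u\rightarrow\cF_i}\rangle$ for a single $u\in\cF_1$. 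Once that bookkeeping is pinned down, the dimension count is identical to \eqref{lem_constant1_1}–\eqref{lem_constant1_2}. I would also double-check the range $t\le\lfloor\frac{k-1}{h}\rfloor$ is what guarantees that at each stage there are still $k$ systematic nodes available to host $\{u\}\cup\cF_2\cup\dots\cup\cF_{t+1}$ so that the MDS/repair hypotheses legitimately apply.
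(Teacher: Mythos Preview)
Your inductive skeleton and the list of ingredients (part 1) of \cref{algebraic_setting}, equation~\eqref{key_lem_eq1} of \cref{key_lem}, then the induction hypothesis applied with a node of $\cF_1$ in the role of $j$) are all correct, but the central linear-algebra step does not go through as written. You propose to work with $(I_r\otimes S)A(:,\cF_1)$, where $S=T_iS_{j\rightarrow\cF_i}$ is built from the \emph{single} fixed helper $j$. However, both part 1) of \cref{algebraic_setting} and \cref{key_lem} are statements about $S_{\cP\rightarrow\cF_i}=\di(S_{p_1\rightarrow\cF_i},\ldots,S_{p_r\rightarrow\cF_i})$, the block-diagonal matrix assembled from the \emph{different} parity nodes. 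In the non-constant setting $(I_r\otimes S_{j\rightarrow\cF_1})\ne S_{\cP\rightarrow\cF_1}$, so there is no reason for $(I_r\otimes S_{j\rightarrow\cF_1})A(:,\cF_1)$ to be invertible, and no reason for $\langle (I_r\otimes S_{j\rightarrow\cF_i})A(:,\cF_1)\rangle$ to equal $\langle S_{\cF_1\rightarrow\cF_i}\rangle$. Neither side of your intended sandwich (the analogue of \eqref{lem_constant1_1}--\eqref{lem_constant1_2}) is justified.

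The idea you are missing---and which the paper supplies as a preliminary step---is that the quantity $\dim\bigcap_{i}\langle S_{j\rightarrow\cF_i}\rangle$ is \emph{independent of $j$} for $j\in[n]\setminus\bigcup_i\cF_i$. One sees this by choosing the systematic set $\cU$ to contain $\bigcup_i\cF_i$; then for $v\in\cU\setminus\bigcup_i\cF_i$ and $p\in\cP$, invertibility of $A_{p,v}$ and part 2) of \cref{algebraic_setting} give $\langle S_{p\rightarrow\cF_i}A_{p,v}\rangle=\langle S_{v\rightarrow\cF_i}\rangle$, hence the intersections for $p$ and for $v$ have equal dimension. Once this is established you may take $j=p_1$ and, crucially, form $\di(S_{p_1},\ldots,S_{p_r})$ where each $S_p$ is a basis matrix for $\bigcap_i\langle S_{p\rightarrow\cF_i}\rangle$; all blocks have the \emph{same} rank. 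Writing $S_p=T_{p,i}S_{p\rightarrow\cF_i}$ gives
\[
\di(S_{p_1},\ldots,S_{p_r})A(:,\cF_1)=\di(T_{p_1,i},\ldots,T_{p_r,i})\,S_{\cP\rightarrow\cF_i}A(:,\cF_1),
\]
which now legitimately matches the hypotheses of part 1) (for $i=1$, giving the lower bound $r\cdot\dim\langle S_{p_1}\rangle$) and of \cref{key_lem} (for $i\ge2$, giving containment in $\langle S_{\cF_1\rightarrow\cF_i}\rangle$). You do brush against the independence-of-$j$ idea at the very end, when splitting $\bigcap_{i\ge2}\langle S_{\cF_1\rightarrow\cF_i}\rangle$ into $h$ equal blocks; the point is that the same observation is needed \emph{earlier}, for the parity nodes, to build the block-diagonal object that makes the argument run.
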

\begin{proof}
First we choose a collection of $k-th$ nodes $\cV$ from $[n] \setminus  \bigcup\limits_{i=1}^{t} \cF_i $.
Now we regard the nodes in $\cU= \bigcup\limits_{i=1}^{t} \cF_i \bigcup \cV$ as the systematic nodes,
and the other nodes $\cP=\{p_1,\ldots,p_r\}=[n]\backslash \cU$ the parity nodes. Then we have \([n] \setminus  \bigcup\limits_{i=1}^{t} \cF_i =\cV \cup \cP\).

For any \(v\in \cV\) and \(p\in \cP\), we have
   \begin{align}
       \dim\left(\mathop{\bigcap}\limits_{i=1}^{t} \langle S_{p \rightarrow \cF_i} \rangle\right)
       &=\dim\left( \mathop{\bigcap}\limits_{i=1}^{t} \langle S_{p \rightarrow \cF_i} A_{p,v}\rangle \right)\label{subsec_nonconstant1_2}\\
       &=\dim\left(\mathop{\bigcap}\limits_{i=1}^{t}\langle S_{v\rightarrow \cF_i}\rangle\right)\label{subsec_nonconstant1_3}.
   \end{align}
Here (\ref{subsec_nonconstant1_2}) follows from that \(A_{p,v}\) is invertible and (\ref{subsec_nonconstant1_3}) follows from 2) of \cref{algebraic_setting}. 
Therefore the dimension of the subspaces $\mathop{\bigcap}\limits_{i=1}^{t} \langle S_{{j} \rightarrow \cF_i} \rangle$
are the same for all nodes $j\in\cV \cup \cP$.
 
Then we  prove the result  by induction on \(t\). 
For \(t = 1\), by 3) of \cref{algebraic_setting}, we have 
$$ \dim(\langle S_{j\rightarrow\cF_1} \rangle) = \frac{h}{r}\ell \quad\text{for } j\in\cV \cup \cP.$$
Assume the result holds for \(t\), and now we will prove it also holds for \(t+1\). From above, it suffices to  prove the result for \(j=p_1\).

For any node \(p\in \cP\), fix a basis for the subspace \(\bigcap\limits_{i=1}^{t+1}\langle S_{p\rightarrow\cF_{i}}\rangle\),
and set $S_p$ to be the matrix whose rows are the basis vectors.
Then \( \langle S_p \rangle =\bigcap\limits_{i=1}^{t+1}\langle S_{p\rightarrow\cF_{i}}\rangle\), and 
\(\langle S_p\rangle\subseteq \langle S_{p\rightarrow\cF_i}\rangle\) for \(i\in [t+1]\). 
Hence we can find matrices \(T_{p,i}\), whose rank is equal to $\dim(\langle S_p \rangle)$, such that \(S_p=T_{p,i}S_{p\rightarrow\cF_i}\).
It follows that
\begin{align*}
   \begin{bmatrix}
        S_{p_1}&&\\
        &\ddots&\\
        &&S_{p_r}
    \end{bmatrix}A{(:,\cF_1)}&=\begin{bmatrix}
        T_{p_1,i}&&\\
        &\ddots&\\
        &&T_{p_r,i}
    \end{bmatrix}S_{P\rightarrow \cF_i}A{(:,\cF_1)},
\end{align*}
for $i\in[t+1]$. If $i=1$, then by 1) of \cref{algebraic_setting} we see that  the matrix \(S_{P\rightarrow \cF_1}A{(:,\cF_1)}\) is invertible. 
Then
\begin{align}\label{lem_nonconstant1_2}
\begin{split}
  \dim\left\langle 
  \begin{bmatrix}
        S_{p_1}&&\\
        &\ddots&\\
        &&S_{p_r}
    \end{bmatrix}A{(:,\cF_1)}
    \right\rangle
    &=\dim\left\langle
    \begin{bmatrix}
        T_{p_1,1}&&\\
        &\ddots&\\
        &&T_{p_r,1}
    \end{bmatrix}
    \right\rangle\\
    &=\sum_{i=1}^{r}\rank(T_{p_i,1})\\
    &=r\cdot \dim(\langle S_{p_1} \rangle).\\
    \end{split}
\end{align}
On the other hand, for \(2\leq i\leq t+1\)  we have 
$$\left\langle \begin{bmatrix}
        S_{p_1}&&\\
        &\ddots&\\
        &&S_{p_r}
    \end{bmatrix}A{(:,\cF_1)}\right\rangle\subseteq \left\langle S_{P\rightarrow \cF_i}A{(:,\cF_1)}\right\rangle=\left\langle S_{\cF_1\rightarrow\cF_i} \right\rangle.$$ 
Here the latter equality follows from (\ref{key_lem_eq1}) of \cref{key_lem}. 
 Then
\begin{align*}
    \left\langle \begin{bmatrix}
        S_{p_1}&&\\
        &\ddots&\\
        &&S_{p_r}
    \end{bmatrix}A{(:,\cF_1)}\right\rangle \subseteq\mathop{\bigcap}_{i=2}^{t+1}\langle  S_{P\rightarrow\cF_i}A{(:,\cF_1)}\rangle=\mathop{\bigcap}_{i=2}^{t+1}\langle  S_{\cF_1\rightarrow\cF_i} \rangle.
\end{align*}
By assumption we have
\begin{align}
\begin{split}
\label{lem_nonconstant1_3}
\dim\left(\left\langle \begin{bmatrix}
        S_{p_1}&&\\
        &\ddots&\\
        &&S_{p_r}
    \end{bmatrix}A{(:,\cF_1)}\right\rangle\right)
&\le \dim\left(\mathop{\bigcap}_{i=2}^{t+1}\langle  S_{\cF_1\rightarrow\cF_i} \rangle\right)\\
&= h\cdot\dim \left(\mathop{\bigcap}\limits_{i=2}^{t+1}\langle S_{u\rightarrow\cF_i}\rangle\right)\\
&\le h\left(\frac{h}{r}\right)^{t}\ell,
\end{split}
\end{align}
for any \(u\in \cF_1\).
By (\ref{lem_nonconstant1_2}) and (\ref{lem_nonconstant1_3}), we can compute that
\begin{align*}
     \dim(\langle S_{p_1}\rangle)\leq \left(\frac{h}{r}\right)^{t+1}\ell.
\end{align*}

 \end{proof}

\begin{proof}[Proof of  Theorem \ref{thm_nonconstant2}]
If \(\ell\ge\left(\frac{d-k+h}{h}\right)^{\left\lfloor\frac{k-1}{h}\right\rfloor}\), then the conclusion holds.
Hence we assume that \(\ell<\left(\frac{d-k+h}{h}\right)^{\left\lfloor\frac{k-1}{h}\right\rfloor}\).
  Set \( \cF_i=\{{(i-1)h+1},\ldots,{ih}\}\) for $1\leq i\leq \left\lfloor\frac{n-1}{h}\right\rfloor$.
  Define a bipartite graph where one set of vertices is the standard basis vectors 
\(\{e_j : j \in [\ell]\}\), and the other set of vertices is repair 
subspaces $\langle S_{n\rightarrow\cF_i}\rangle ,1\leq i\leq \left\lfloor\frac{n-1}{h}\right\rfloor$. 
An edge exists between a vertex \(e_j\) and a subspace \(\langle S_{n\rightarrow\cF_i}\rangle\)  if and only if 
\(e_j\in \langle S_{n\rightarrow\cF_i}\rangle\). 

Count the total number of edges using two different methods.
  Similar to the proof of \cref{thm_constant1}, there are exactly
 \(\left\lfloor\frac{n-1}{h}\right\rfloor\cdot\frac{h\ell}{d-k+h}\) edges in the graph. 
  On the other hand,
  we first prove that for all \(\cE\subset \{1,\ldots,\lfloor\frac{n-1}{h}\rfloor\}\) and \(|\cE|\le \left\lfloor\frac{k-1}{h}\right\rfloor\), 
  \[\dim\left(\mathop{\bigcap}\limits_{i\in \cE}\langle S_{n\rightarrow \cF_i}\rangle\right)\leq \left(\frac{h}{d-k+h}\right)^{|\cE|}.\] 
  It holds for \(d=n-h\) by Lemma \ref{lem_nonconstant1}. 
  For general \(d\), we puncture some $n-d-h$ nodes from $[n-1]\bl\bigcup\limits_{i\in \cE}\cF_i$, and get a new code $\C'$ with length $n'=d+h$, dimension \(k'=k\) and \(d=n'-h\).  
  The result follows by applying Lemma~\ref{lem_nonconstant1} to $\C'$.
  
  For \(\left\lfloor\frac{k-1}{h}\right\rfloor<|\cE|\le \left\lfloor\frac{n-1}{h}\right\rfloor\), we choose any subset $\mathcal{H}$ of $\cE$ with size $\lfloor\frac{k-1}{h}\rfloor$, and obtain
    \begin{align*}
        \dim\left(\bigcap_{i\in\cE}\langle S_{n\rightarrow\cF_i}\rangle\right)&\leq \dim\left(\bigcap_{i\in\mathcal{H}} \langle S_{n\rightarrow\cF_i} \rangle\right) \\
        &\leq \left(\frac{h}{d-k+h}\right)^{\left\lfloor\frac{k-1}{h}\right\rfloor}\ell\\&<1.
    \end{align*}
    Therefore,
    \begin{align*}
        \dim\left(\bigcap_{i\in \cE} \langle S_{n\rightarrow\cF_i} \rangle\right)\leq\left(\frac{h}{d-k+h}\right)^{|\cE|}\ell \quad \text{for }1\le|\cE|\le \left\lfloor\frac{n-1}{h}\right\rfloor.
    \end{align*}
    Let \(\mathcal{N}_j=\{i:e_j\in\langle S_{n\rightarrow\cF_i}\rangle\}\) be the index set of repairing subspaces which contain \(e_j\). 
   Just as (\ref{numberN_1}) and (\ref{numberN_2}), there are at most \(\ell\cdot\left\lfloor\log_{\frac{d-k+h}{h}}\ell\right\rfloor\) edges in the graph.
   We have
    \begin{align*}
     \left\lfloor\frac{n-1}{h}\right\rfloor\cdot\frac{h\ell}{d-k+h}\le\ell\cdot \left\lfloor\log_{\frac{d-k+h}{h}}\ell\right\rfloor,
    \end{align*}
   and 
    \begin{align*}
    \ell\ge\left(\frac{d-k+h}{h}\right)^{\left\lceil\left\lfloor\frac{n-1}{h}\right\rfloor\frac{h}{d-k+h}\right\rceil}.
    \end{align*}
   In summary we have
     \[l\ge\min\left\{\left(\frac{d-k+h}{h}\right)^{\left\lfloor\frac{k-1}{h}\right\rfloor},\left(\frac{d-k+h}{h}\right)^{\left\lceil\left\lfloor\frac{n-1}{h}\right\rfloor\frac{h}{d-k+h}\right\rceil}\right\}.\]
\end{proof}

 \begin{corollary}
     Let \(\C\) be an optimal-access \((h,n-h)\) centralized MSR code. Then
      \[l\ge\min\left\{\left(\frac{r}{h}\right)^{\left\lceil\frac{k-1}{h}\right\rceil},\left(\frac{r}{h}\right)^{\left\lceil\left\lfloor\frac{n-1}{h}\right\rfloor\frac{h}{r}\right\rceil}\right\}.\]
 \end{corollary}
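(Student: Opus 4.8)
The plan is to read the corollary off from \cref{thm_nonconstant2} by taking \(d=n-h\). The first point to make is that when \(d=n-h\) the helper set of a failed set \(\cF\) is forced to be \(\cR=[n]\bl\cF\), so there is no ``choice of the remaining \((d-1)\) helper nodes'' at all and the hypothesis of \cref{thm_nonconstant2} is vacuous; hence the statement applies to every optimal-access \((h,n-h)\) centralized MSR code. Plugging \(d-k+h=n-k=r\) into \cref{thm_nonconstant2} immediately gives
\[
\ell\ge\min\left\{\left(\tfrac{r}{h}\right)^{\left\lfloor\frac{k-1}{h}\right\rfloor},\ \left(\tfrac{r}{h}\right)^{\left\lceil\left\lfloor\frac{n-1}{h}\right\rfloor\frac{h}{r}\right\rceil}\right\},
\]
which is the corollary except that the first exponent is the floor \(\lfloor\frac{k-1}{h}\rfloor\) rather than the ceiling \(\lceil\frac{k-1}{h}\rceil\); the two coincide unless \(h\nmid(k-1)\), in which case the ceiling is exactly one larger.

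To obtain the ceiling I would re-run the proof of \cref{thm_nonconstant2} for \(d=n-h\) with \(\lceil\frac{k-1}{h}\rceil\) disjoint size-\(h\) failed sets in place of \(\lfloor\frac{k-1}{h}\rfloor\) of them. Concretely, one needs the following strengthening of \cref{lem_nonconstant1}: for an optimal-access \((h,n-h)\) centralized MSR code, any \(t\le\lceil\frac{k-1}{h}\rceil\) pairwise disjoint size-\(h\) subsets \(\cF_1,\dots,\cF_t\subseteq[n]\), and any node \(j\notin\bigcup_i\cF_i\), one has \(\dim(\bigcap_{i=1}^{t}\langle S_{j\rightarrow\cF_i}\rangle)\le(h/r)^{t}\ell\). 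With this in hand the rest is unchanged: setting \(\cF_i=\{(i-1)h+1,\dots,ih\}\) for \(1\le i\le\lfloor\frac{n-1}{h}\rfloor\) and forming the bipartite graph between the \(e_j\)'s and the coordinate subspaces \(\langle S_{n\rightarrow\cF_i}\rangle\) (each of which contains exactly \(\frac{h\ell}{r}\) of the \(e_j\)'s, by optimal access and 3) of \cref{algebraic_setting}), the double count of edges gives \(\ell\ge(r/h)^{\lceil\lfloor(n-1)/h\rfloor h/r\rceil}\) whenever \(\ell<(r/h)^{\lceil(k-1)/h\rceil}\), and the corollary follows by combining the two cases.

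The crux is the strengthened \cref{lem_nonconstant1}. Its proof would follow the induction of \cref{lem_nonconstant1}: the levels \(t\le\lfloor\frac{k-1}{h}\rfloor\) are already done, and in the step to the single new level \(t=\lfloor\frac{k-1}{h}\rfloor+1\) the induction hypothesis is applied only to the \(\lfloor\frac{k-1}{h}\rfloor\) sets \(\cF_2,\dots,\cF_t\) and one coordinate \(u\in\cF_1\), a configuration that fits inside an honest information set (it touches at most \(h\lfloor\frac{k-1}{h}\rfloor+1\le k\) nodes), so \cref{lem_nonconstant1} applies to it verbatim. The obstacle, and where the real work lies, is that \(\cF_1\cup\cdots\cup\cF_t\) now exceeds \(k\) nodes, so the original proof's trick of placing all the \(\cF_i\) inside one information set is unavailable; I would get around this by re-deriving \cref{algebraic_setting} and the identity \(\langle S_{P\rightarrow\cF_i}A(:,\cF_1)\rangle=\langle S_{\cF_1\rightarrow\cF_i}\rangle\) of~\eqref{key_lem_eq1} relative to an arbitrary information set containing \(\cF_1\) (the MDS property makes all the relevant block submatrices of the corresponding generator matrix invertible, so the maps identifying \(\langle S_{a\rightarrow\cF}\rangle\) for different helpers \(a\) are invertible and the dimension counts are unaffected), which is exactly what is needed to ``expand'' \(\cF_1\)'s repair subspace through \(A(:,\cF_1)\) while leaving \(\cF_2,\dots,\cF_t\) to the induction hypothesis, closing the extra step.
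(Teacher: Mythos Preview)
Your direct substitution of \(d=n-h\) into \cref{thm_nonconstant2} is exactly the paper's proof, which in its entirety reads: ``The result follows directly by setting \(d=n-h\) in \cref{thm_nonconstant2}.'' Your observation that the hypothesis of \cref{thm_nonconstant2} is vacuous when \(d=n-h\) (only one helper set is possible) is correct and is the implicit justification the paper relies on.

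You have also correctly noticed that this substitution yields the exponent \(\lfloor\frac{k-1}{h}\rfloor\), not the \(\lceil\frac{k-1}{h}\rceil\) printed in the corollary. The paper's one-line proof does not close this gap either; the ceiling is almost certainly a typographical slip for the floor. Your plan to upgrade \cref{lem_nonconstant1} to \(t\le\lceil\frac{k-1}{h}\rceil\) is therefore not what the authors intended, and it also carries a real obstacle you have glossed over: in the induction step one must use, \emph{for a single fixed} systematic/parity split \((\cU,\cP)\), both the invertibility of \(S_{\cP\to\cF_1}A(:,\cF_1)\) and the identity \(\langle S_{\cP\to\cF_i}A(:,\cF_1)\rangle=\langle S_{\cF_1\to\cF_i}\rangle\) for every \(i\ge2\). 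Both \cref{algebraic_setting} and \cref{key_lem} require the failed sets involved to lie inside that same \(\cU\), so you cannot pick a different information set per \(\cF_i\); yet no single \(\cU\) of size \(k\) can contain all of \(\cF_1,\dots,\cF_t\) once \(th>k\). Your sketch of ``re-deriving the identities relative to an arbitrary information set containing \(\cF_1\)'' does not explain how to get the row-space equality for an \(\cF_i\) that lies partly in \(\cP\).
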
\label{thm_nonconstant1}
 \begin{proof}
     The result follows directly by setting $d=n-h$ in~\cref{thm_nonconstant2}.
 \end{proof}
    
 \section{Improving Bounds Using Generating Functions}\label{improving_bounds}
 In this section, we employ generating functions to perform a more refined analysis, further enhancing the bounds in Sections \ref{sec_constant}-\ref{non-constant}.
 \subsection{Generating Functions}
    Fix an integer \(\co\) with \(1\le c<h\).  Let \(a\) (\(2\le a\le h\)) be the unique integer  such that  \(c \in \left(\frac{(a-2)h}{a-1},\frac{(a-1)h}{a}\right]\). 
 We define a strictly decreasing sequence \(\{B_n\}_{n\ge 0}\) by the following recurrence formula
    \begin{align}\label{generating_function_key}
        \begin{cases} 
     B_0=1, \\[5pt]
      B_1=\frac{h}{d-k+h},\\[5pt]
     B_j= \frac{h-c}{d-k+h}B_{j-1}+\ldots+\frac{h-c}{d-k+h}B_{1}+\frac{(j-1)c-(j-2)h}{d-k+h}B_0,\ j\in [a-1],\\[5pt]
     B_{n+a}=\frac{h-c}{d-k+h}B_{n+a-1}+\ldots+\frac{h-c}{d-k+h}B_{n+1}+\frac{(a-1)c-(a-2)h}{d-k+h}B_n, n\in \mathbb{N}.
     \end{cases}
    \end{align}
 Let \(B=\sum\limits_{n\ge 0} B_nx^n\) be the generating function of the above recurrence formula. Then it is not hard to check that
 \begin{align*}
     \frac{B-\sum\limits_{i=0}^{a-1}B_ix^i}{x^a}=\frac{h-c}{d-k+h}\frac{B-\sum\limits_{i=0}^{a-2}B_ix^i}{x^{a-1}}+\ldots+\frac{h-c}{d-k+h}\frac{B-B_0}{x}+\frac{(a-1)c-(a-2)h}{d-k+h}B.
 \end{align*}
 Via direct calculation, we get
 \begin{align}
     B=\frac{\sum\limits_{i=0}^{a-1}B_ix^i-\frac{h-c}{d-k+h}\sum\limits_{i=0}^{a-2}B_ix^{i+1}-\ldots-\frac{h-c}{d-k+h}B_0x^{a-1}}{1-\frac{h-c}{d-k+h}x-\ldots-\frac{h-c}{d-k+h}x^{a-1}-\frac{(a-1)c-(a-2)h}{d-k+h}x^a}.
 \end{align}
 Write 
 \begin{align}
     \phi(x)= \theta_1 x^a+\theta_2 x^{a-1}+\cdots+\theta_2 x+1, 
 \end{align}
 where \(\theta_1=-\frac{(a-1)c-(a-2)h}{d-k+h}\), \(\theta_2=\frac{c-h}{d-k+h}\).
 By calculating  roots of \(\phi(x)\) over \(\mathbb{C}\), we can expand \(B\) as a formal power series and then get the expression of \(B_n\). 
 In particular if the polynomial \(\phi(x)\) has only simple roots \(x_1,\ldots,x_a\), namely, 
 $$\phi(x)=\theta_1\prod\limits_{i=1}^{a} (x-x_i),$$
  we can compute that
  \begin{align}\label{generalB}
   B=\frac{1}{\theta_1}\sum_{n\ge0}\left(\sum\limits_{i=1}^a\left(\frac{1}{x_{i}^{n-a+2}\prod\limits_{j\neq i }(x_i-x_j)}\left(\sum\limits_{u=0}^{a-1}\left(\frac{-1}{x_i^{a-u-1}}+\frac{h-c}{d-k+h}\sum\limits_{v=0}^{a-u-2}\frac{1}{x_i^{v}}\right)B_u\right)\right)\right)x^n.
  \end{align}
 If \(1\le c \le \lfloor\frac{h}{2}\rfloor\), then we have \(a=2\), and
 \begin{align*}
     B=\frac{B_0+(B_1-\frac{h-c}{d-k+h}B_0)x}{1-\frac{h-c}{d-k+h}x-\frac{c}{d-k+h}x^2}.
 \end{align*}
 Now the polynomial \(\phi(x)\) is  quadratic and \(\Delta=\Delta(c):=\left(\frac{h-c}{d-k+h}\right)^2+\frac{4c}{d-k+h}> 0\), 
 hence $\phi(x)$ has two different roots:
 \[\alpha=\alpha(c):=\frac{(c-h)+\sqrt{\Delta}(d-k+h)}{2c},\ \gamma=\gamma(c):=\frac{(c-h)-\sqrt{\Delta}(d-k+h)}{2c}. \]
 By (\ref{generalB}), we have 
 \begin{align*}
     B&=\frac{1}{\sqrt\Delta}\sum\limits_{n\ge0}\left(\left(\frac{1}{\alpha}+\frac{c}{d-k+h}\right)\frac{1}{\alpha^n}+\left(-\frac{1}{\gamma}-\frac{c}{d-k+h}\right)\frac{1}{\gamma^n}\right) x^n.
 \end{align*}
 As \(1<|\alpha|<|\gamma|\), we get
 \begin{align}\label{a_2b_n}
 \begin{split}
     B_n &\le\frac{2}{\sqrt\Delta}\left(\frac{1}{\alpha}+\frac{c}{d-k+h}\right)\frac{1}{\alpha^n}\\
     &\le \frac{4}{\sqrt\Delta}\frac{1}{\alpha^n}.
 \end{split}
  \end{align}
 
 \subsection{A Lower Bound with Constant  Repair Matrices}
First, we consider repair schemes with constant repair matrices.  
The following result generalizes~\cref{lem_constant1} by allowing the subsets \(\mathcal{F}_i\) to have nonempty intersections.
 \begin{lemma}\label{generating_lemma_1}
     Suppose that $\C$ has  $(h,d)$ optimal repair schemes with constant repair matrices. 
     Fix an integer \(\co\) with \(1\le c<h\).  Let \(a\) (\(2\le a\le h\)) be the unique  integer  such that  \(c \in \left(\frac{(a-2)h}{a-1},\frac{(a-1)h}{a}\right]\). 
     Let \(\cF_i=\{(i-1)(h-c)+1,\ldots,(i-1)(h-c)+h\}\), \(1\le i\le\lfloor\frac{n-h}{h-c}\rfloor+1\). 
     Then, for any \(\cE\subseteq [\lfloor\frac{n-h}{h-c}\rfloor+1]\) with \( 1\le |\cE|\le \lfloor\frac{k-h}{h-c}\rfloor+1\), we have 
     \begin{equation}
        \dim\left(\bigcap\limits_{i\in \cE}\langle S_{\cF_i}\rangle\right)\leq B_{|\cE|}\ell,
    \end{equation}
    where \(B_{|\cE|}\) is the \(|\cE|\)-th  term of the sequence \(\{B_n\}_{n\ge 0}\) defined above.
 \end{lemma}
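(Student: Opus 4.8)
The plan is to prove the inequality by strong induction on $m:=|\cE|$, following the inductive mechanism behind \cref{lem_constant1} but peeling off the windows $\cF_i$ one at a time and bookkeeping the pairwise overlaps block by block. The point is that, once a single window is separated off by the same ``apply $A(:,\cF_{i_1})$ and use invertibility'' trick, the remaining constraints can be read off coordinate block by coordinate block, and the weak-looking bound $\dim W\le\sum_s\dim(\text{$s$-th block projection of }W)$ turns out to reproduce \emph{exactly} the defining recurrence \eqref{generating_function_key}.

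First the reductions. Exactly as in the proof of \cref{lem_constant1}, puncturing $n-d-h$ parity nodes reduces us to the case $d=n-h$; I write $r=d-k+h$ and work in the punctured code. I will run the argument when $\cE$ is an interval $\{1,\dots,m\}$, in which case the $k$ systematic nodes can be chosen to contain $\bigcup_{i\in\cE}\cF_i$, since $(m-1)(h-c)+h\le k$ precisely when $m\le\lfloor\frac{k-h}{h-c}\rfloor+1$; for a general $\cE$ the same computation applies verbatim and, because $\{B_n\}$ is decreasing while larger index gaps only enlarge the ``avoidance counts'' $\nu(s)$ appearing below, it yields an even smaller bound (and one only ever needs a single window and one extra node, i.e.\ $h+1\le k$ nodes, to sit in a common systematic set at a time). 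The base case $m=1$ is immediate from part 3) of \cref{algebraic_setting}: $\dim\langle S_{\cF_1}\rangle=\frac{h\ell}{d-k+h}=B_1\ell$.

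For the inductive step fix $\cE=\{1,\dots,m\}$, let $S$ be a matrix whose rows form a basis of $\bigcap_{i=1}^m\langle S_{\cF_i}\rangle$, and write $S=T_iS_{\cF_i}$ with $\rank T_i=\dim\langle S\rangle$ for $i\in[m]$. Put $W:=\langle(I_r\otimes S)A(:,\cF_1)\rangle\subseteq\F^{h\ell}$. Using $i=1$, the matrix $(I_r\otimes S_{\cF_1})A(:,\cF_1)=S_{\cP\to\cF_1}A(:,\cF_1)$ is an invertible $h\ell\times h\ell$ matrix by part 1) of \cref{algebraic_setting}, so $\dim W=r\dim\langle S\rangle$, just as for \cref{lem_constant1}. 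Now decompose $\F^{h\ell}=\bigoplus_{s=1}^h\F^\ell$ along the $h$ nodes of $\cF_1$ and let $W_s$ be the projection of $W$ onto the $s$-th block. For every $l\in\{2,\dots,m\}$ such that the $s$-th node of $\cF_1$ does not lie in $\cF_l$, the identity $S=T_lS_{\cF_l}$ together with part 2) of \cref{algebraic_setting}, applied to the $s$-th column block of $A(:,\cF_1)$ (here constancy of the repair matrices is used), gives $W_s\subseteq\langle S_{\cF_l}\rangle$. Hence $W_s$ lies in the intersection of the constant repair subspaces of the $\nu(s)$ windows among $\cF_2,\dots,\cF_m$ that avoid node $s$; since the $s$-th node of $\cF_1$ lies in precisely $\cF_1,\dots,\cF_{\lfloor(s-1)/(h-c)\rfloor+1}$, one has $\nu(s)=\max(0,\,m-1-\lfloor(s-1)/(h-c)\rfloor)\le m-1$, and those windows are $\{\cF_j:j\in\cE''\}$ for some $\cE''$ with $|\cE''|=\nu(s)<m$. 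By the induction hypothesis $\dim W_s\le B_{\nu(s)}\ell$, so $r\dim\langle S\rangle=\dim W\le\sum_{s=1}^h\dim W_s\le\ell\sum_{s=1}^hB_{\nu(s)}$.

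It remains to verify the arithmetic identity $\sum_{s=1}^hB_{\nu(s)}=(d-k+h)\,B_m$, and this I expect to be the only genuine obstacle (everything above is the linear algebra already carried out in \cref{lem_constant1,key_lem}). Grouping the $h$ nodes of $\cF_1$ by the value $\lfloor(s-1)/(h-c)\rfloor$ produces $a-1$ groups of size $h-c$, on which $\nu(s)$ takes the values $m-1,m-2,\dots,m-(a-1)$ truncated at $0$, together with one last group of size $h-(a-1)(h-c)=(a-1)c-(a-2)h$ on which $\nu(s)=\max(0,m-a)$; substituting into $\sum_s B_{\nu(s)}$ and dividing by $d-k+h$ then reproduces \eqref{generating_function_key} for $B_m$ once one treats separately the boundary range $m\le a-1$ (where several $\nu(s)$ collapse to $0$ and the first branch of \eqref{generating_function_key}, with coefficient $\tfrac{(m-1)c-(m-2)h}{d-k+h}$, is what appears) and the steady-state range $m\ge a$ (third branch, coefficient $\tfrac{(a-1)c-(a-2)h}{d-k+h}$). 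Granting the identity, $\dim\langle S\rangle=\dim W/r\le B_m\ell$, which closes the induction and proves the lemma.
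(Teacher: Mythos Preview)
Your proposal is correct and follows essentially the same route as the paper: reduce to $d=n-h$ by puncturing, induct on $|\cE|$, use part 1) of \cref{algebraic_setting} at $i=1$ to get $\dim W=r\dim\langle S\rangle$, and then control $W$ block by block on $\cF_1$ via part 2) of \cref{algebraic_setting}; your verification that $\sum_{s}B_{\nu(s)}=(d-k+h)B_m$ is exactly the paper's case split $t\in[a-1]$ versus $t\ge a$. The only cosmetic difference is that you project $W$ onto each of the $h$ coordinate blocks directly, whereas the paper first records the inclusion \eqref{key_lem_2} and then bounds the resulting intersection $\bigcap_i(U_i+V_i)$---but unpacking that intersection block by block gives precisely your $W_s\subseteq\bigcap_{l:\,s\notin\cF_l}\langle S_{\cF_l}\rangle$, so the two computations are the same. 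One small caveat: your parenthetical that ``one only ever needs $h+1\le k$ nodes in a common systematic set'' undercounts---the argument as written needs a single $\cU$ containing $\cF_{i_1}$ \emph{and} every $\cF_l$ appearing in the step (since $W$ is built from a fixed $A$); for consecutive $\cE$ this is exactly the hypothesis $(m-1)(h-c)+h\le k$, and for general $\cE$ the paper, like you, simply asserts the extension without detail.
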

 \begin{proof}
 Write \(t=|\cE|\). We first assume that $\cE$ is consisting of some $t$ consecutive integers.
 Without loss of generality, we assume that systematic nodes set \(\cU=[k]\) and \(\cE = \{1, 2, \dots, t\}\).

  If \(d = n - h\), we can prove the result by induction on \(t\). 
  For \( t = 1 \), the conclusion holds as \( \dim(\langle S_{\cF_1} \rangle) = \frac{h\ell}{r} \).
  Assume that the the result holds for \(t\), and now we will prove it also holds for \(t+1\).
Just as in the proof of~\cref{lem_constant1}, there exist  matrices \(S\) and \(T_i\) of the same rank such that \( \langle S \rangle =\bigcap\limits_{i=1}^{t+1}\langle S_{\cF_{i}}\rangle\) and \(S=T_iS_{\cF_i}\) for \(i\in[t+1]\).
It follows that
\begin{align*}
   (I_{r}\otimes S)A{(:,\cF_1)}&=(I_{r}\otimes T_i)(I_{r}\otimes S_{\cF_i})A{(:,\cF_1)},\;\text{for}\;i\in[t+1].
\end{align*}
If $i=1$, then by 1) of \cref{algebraic_setting} we see that  the matrix \((I_{r}\otimes S_{\cF_1})A{(:,\cF_1)}\) is invertible. 
Then
\begin{align}\label{generating_function_1}
  \dim(\langle (I_{r}\otimes S)A{(:,\cF_1)}\rangle)=\dim(\langle I_{r}\otimes T_1\rangle)=r\dim(\langle S\rangle).  
\end{align}
On the other hand, for \(2\leq i\leq t+1\),
 we have \(\langle (I_{r}\otimes S)A{(:,\cF_1)}\rangle\subseteq \langle (I_{r}\otimes S_{\cF_i})A{(:,\cF_1)}\rangle\). Then by  
  (\ref{key_lem_2}) of \cref{key_lem}, we get
 \begin{align*}
     &\left\langle (I_{r}\otimes S_{\cF_i})A{(:,\cF_1)}\right\rangle\subseteq\left\langle\begin{bmatrix}
     I_{|\cF_1\bl (\cF_1\cap\cF_i)|}\otimes S_{\cF_i}&\bO\\[8pt]
     \bO&(I_{r}\otimes S_{\cF_i})A(:,\cF_1\cap\cF_i)\\
 \end{bmatrix}\right\rangle. \\
 &\subseteq\left\langle\begin{bmatrix}
      I_{|\cF_1\bl (\cF_1\cap\cF_i)|}\otimes S_{\cF_i}&\bO
 \end{bmatrix}\right\rangle+
 \left\langle\begin{bmatrix}
     \bO&(I_{r}\otimes S_{\cF_i})A(:,\cF_1\cap\cF_i)
 \end{bmatrix}\right\rangle.
  \end{align*}
 So, we can obtain
 \begin{align}\label{generating_function_2}
 \begin{split}
     &\left\langle (I_{r}\otimes S)A{(:,\cF_1)}\right\rangle \subseteq\mathop{\bigcap}_{i=2}^{t+1}\left\langle\begin{bmatrix}
     I_{|\cF_1\bl (\cF_1\cap\cF_i)|}\otimes S_{\cF_i}&\bO\\[8pt]
     \bO&(I_{r}\otimes S_{\cF_i})A(:,\cF_1\cap\cF_i)\\
 \end{bmatrix}\right\rangle\\
 &\subseteq\mathop{\bigcap}_{i=2}^{t+1}\left(\left\langle\begin{bmatrix}
      I_{|\cF_1\bl (\cF_1\cap\cF_i)|}\otimes S_{\cF_i}&\bO
 \end{bmatrix}\right\rangle+
 \left\langle\begin{bmatrix}
     \bO&(I_{r}\otimes S_{\cF_i})A(:,\cF_1\cap\cF_i)
 \end{bmatrix}\right\rangle\right).
 \end{split}
 \end{align}
     
     If \(t\in [a-1]\),  then we see that the intersection \(\cF_j\cap\cF_1\) is nonempty for \(j\in \{2,\ldots,t+1\}\), as 
     \[(j-1)(h-c)+1<(a-1)\left(h-\frac{(a-2)h}{a-1}\right)+1=h+1.\]
     By (\ref{generating_function_2}) we have 
     \begin{align*}
     &\dim\langle (I_{r}\otimes S)A{(:,\cF_1)}\rangle \le (h-c)\sum_{i=2}^{t+1}\dim\left(\mathop{\bigcap}_{j=i}^{t+1} \langle
      S_{\cF_j}\rangle\right)+(tc-(t-1)h)\ell.\\
 \end{align*}
 Combine (\ref{generating_function_1}) and (\ref{generating_function_2}), we can compute that
 \begin{align*}
     \dim(\langle S\rangle)&\le \frac{h-c}{r}\sum_{i=1}^{t}B_{i}\ell
      +\frac{tc-(t-1)h}{r}B_0\ell\\
      &=\left(\frac{h-c}{r}\sum_{i=1}^{t}B_{i}
      +\frac{tc-(t-1)h}{r}B_0\right)\ell\\
      &=B_{t+1}\ell,
 \end{align*}
  and the conclusion holds.
 
 If \(t\ge a\),  then we see that the intersection \(\cF_j\cap\cF_1\) is empty for \(j\geq a+1\), as 
    \[(j-1)(h-c)+1\ge a\left(h-\frac{(a-1)h}{a}\right)+1=h+1.\]
     We have 
    \begin{align}\label{generating_function_3}
    \dim\langle (I_{r}\otimes S)A{(:,\cF_1)}\rangle \le (h-c)\sum_{i=2}^{a}\dim\left(\mathop{\bigcap}_{j=i}^{t+1} \langle
      S_{\cF_i}\rangle\right)+\left(((a-1)c-(a-2)h)\mathop{\bigcap}_{i=a+1}^{t+1} \langle
      S_{\cF_i}\rangle\right).
 \end{align}
 Combine (\ref{generating_function_1}) and (\ref{generating_function_3}), we can compute that
 \begin{align*}
     \dim(\langle S\rangle)&\le \frac{h-c}{r}\sum_{i=t-a+2}^{t}B_{i}\ell
      +\frac{(a-1)c-(a-2)h}{r}B_{t+1-a}\ell\\
      &= \left(\frac{h-c}{r}\sum_{i=t-a+2}^{t}B_{i}
      +\frac{(a-1)c-(a-2)h}{r}B_{t+1-a}\right)\ell\\&=B_{t+1}.
 \end{align*}
 This concludes our proof for the case \(d=n-h\). For general \(d\), we first puncture some $n-d-h$ parity nodes from $\C$ and get a new code $\C'$ with $n'=d+h$ and $r'=d-k+h$.
 Applying the above argument to $\C'$ we get the conclusion.

 For the case that $\cE$ does not consist of 
\(t\) consecutive integers, it is not hard to prove the result similarly as above, and we omit the details.

 \end{proof}
 
  \begin{proof}[Proof of \cref{generating_1}]
  If $\ell\geq \frac{\sqrt{\Delta}}{4}\alpha^{(\lfloor\frac{k-h}{h-c}\rfloor+1)},$ then the conclusion holds.
  Hence we assume that  $\ell< \frac{\sqrt{\Delta}}{4}\alpha^{(\lfloor\frac{k-h}{h-c}\rfloor+1)}$.
  Set \(\cF_i=\{(i-1)(h-c)+1,\ldots,(i-1)(h-c)+h\}\) for \(1\le j\le\lfloor\frac{n-h}{h-c}\rfloor+1\).
  Define a bipartite graph where one set of vertices is the standard basis vectors 
\(\{e_j : j \in [\ell]\}\), and the other set of vertices is the constant repair 
subspaces $\langle S_{\cF_i}\rangle ,1\le i\le\lfloor\frac{n-h}{h-c}\rfloor+1$. 
An edge exists between a vertex \(e_j\) and a subspace \(\langle S_{\cF_i}\rangle\) if and only if \(e_j\in \langle S_{\cF_i}\rangle\). 

Count the total number of edges in this bipartite graph using two different methods.
  Similar to the proof of \cref{thm_constant1}, there are exactly
  \(\left(\lfloor\frac{n-h}{h-c}\rfloor+1\right)\cdot\frac{h\ell}{d-k+h}\) edges in the graph.
  On the other hand, we first prove that for any \(\cE\subseteq [\lfloor\frac{n-h}{h-c}\rfloor+1]\),
  \begin{equation}\label{proof_generating_function_1}
        \dim(\bigcap\limits_{i\in \cE}\langle S_{\cF_i}\rangle)\leq \frac{4}{\sqrt\Delta}\frac{1}{\alpha^n}\ell.
    \end{equation}
    By \((\ref{a_2b_n})\), we have 
 \begin{align*}
     B_n\le\frac{4}{\sqrt\Delta}\frac{1}{\alpha^n}.
  \end{align*}
  For \(\left(\lfloor\frac{k-h}{h-c}\rfloor+1\right)<|\cE|\le \left(\lfloor\frac{n-h}{h-c}\rfloor+1\right)\), we choose any subset $\mathcal{H}$ of $\cE$ with size $\lfloor\frac{k-h}{h-c}\rfloor+1$. By~\cref{generating_lemma_1}, we obtain
    \begin{align*}
       \dim\left(\bigcap_{i\in\cE} \langle S_{\cF_i} \rangle\right)
       &\leq \dim\left(\bigcap_{i\in\mathcal{H} }^{} \langle S_{\cF_i} \rangle\right) \\
       &\le\frac{4}{\sqrt\Delta}\frac{1}{\alpha^{(\lfloor\frac{k-h}{h-c}\rfloor+1)}}\ell\\
       &<1.\\
    \end{align*}
  Combining this with~\cref{generating_lemma_1}, we conclude that (\ref{proof_generating_function_1}) holds for
  \(\cE\subseteq [\lfloor\frac{n-h}{h-c}\rfloor+1]\).
    Let \(\mathcal{N}_j=\{i:e_j\in\langle S_{\cF_i}\rangle\}\) be the index set of repairing subspaces which contain \(e_j\). 
   Then we have
     \begin{align}\label{general_generating_1}
         1\leq\dim\left(\bigcap_{i\in \mathcal{N}_j }\langle S_{\cF_i}\rangle\right)\le\frac{4}{\sqrt\Delta}\frac{1}{\alpha^{|\mathcal{N}_j|}}\ell.
     \end{align}
    As \(\alpha\ge 1\), it follows that 
    \begin{align}\label{general_generating_2}
     |\mathcal{N}_j|\leq \left\lfloor\log_{\alpha}\left(\frac{4}{\sqrt\Delta}\ell\right)\right\rfloor.
    \end{align}
   Hence there are at most \(\ell\cdot\left\lfloor\log_{\alpha}\left(\frac{4}{\sqrt\Delta}\ell\right)\right\rfloor\) edges in the graph.
   We have
    \begin{align*}
     \left(\left\lfloor\frac{n-h}{h-c}\right\rfloor+1\right)\cdot\frac{h\ell}{d-k+h}\le\ell\cdot\left\lfloor\log_{\alpha}\left(\frac{4}{\sqrt\Delta}\ell\right)\right\rfloor,
    \end{align*}
   and 
    \begin{align*}
    \ell\ge\frac{\sqrt{\Delta}}{4}\alpha^{\left\lceil\left(\lfloor\frac{n-h}{h-c}\rfloor+1\right)\cdot\frac{h}{d-k+h}\right\rceil}.
    \end{align*}
   In summary we have
     \[l\ge\min\left\{ \frac{\sqrt{\Delta}}{4}\alpha^{\left(\lfloor\frac{k-h}{h-c}\rfloor+1\right)},\frac{\sqrt{\Delta}}{4}\alpha^{\left\lceil\left(\lfloor\frac{n-h}{h-c}\rfloor+1\right)\cdot\frac{h}{d-k+h}\right\rceil}\right\}.\]
\end{proof}
 
\subsection{A Lower Bound with Repair Matrices Independent of Identity of Remaining  Helper Nodes}
In this section we consider optimal-access \((h,d)\) centralized MSR codes with repair matrices 
independent of the choice of the remaining \((d-1)\) helper nodes. This approach generalizes the bound in ~\cref{lem_constant1}.
 \begin{lemma}\label{generating_lemma_2}
     Let $\C$ be an optimal-access \((h,d)\)-centralized MSR code with repair matrices 
independent of the choice of the remaining \((d-1)\) helper nodes.  Fix an integer \(\co\) with \(1\le c<h\).  Let \(a\) (\(2\le a\le h\)) be the unique integer  such that  \(c \in \left(\frac{(a-2)h}{a-1},\frac{(a-1)h}{a}\right]\). Let \(\cF_i=\{(i-1)(h-c)+1,\ldots,(i-1)(h-c)+h\}\), \(1\le i\le\lfloor\frac{n-1-h}{h-c}\rfloor+1\). Then, for any \(\cE\subseteq [\lfloor\frac{n-1-h}{h-c}\rfloor+1]\) and \( 1\le |\cE|\le \lfloor\frac{k-1-h}{h-c}\rfloor+1\), we have 
     \begin{equation}
        \dim(\bigcap\limits_{i\in \cE}\langle S_{n\rightarrow\cF_i}\rangle)\leq B_{|\cE|}\ell,
    \end{equation}
    where \(B_{|\cE|}\) is the \(|\cE|\)-th  term of the sequence \(\{B_n\}_{n\ge 0}\) defined in (\ref{generating_function_key}).
 \end{lemma}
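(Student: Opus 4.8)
The plan is to run the same induction as in the proof of~\cref{generating_lemma_1}, but with the constant repair matrices $S_{\cF_i}$ replaced by the block-diagonal matrices $S_{\cP\rightarrow\cF_i}$, exactly as~\cref{lem_nonconstant1} adapts~\cref{lem_constant1} to the present setting. \textbf{Reductions.} First I would reduce to $d=n-h$ by puncturing some $n-d-h$ nodes of $\C$ lying outside $\{n\}\cup\bigcup_{i\in\cE}\cF_i$, obtaining a code $\C'$ with $n'=d+h$, $k'=k$ and $d=n'-h$; it then suffices to prove the bound for $\C'$. Next, as in~\cref{lem_nonconstant1}, I would choose $k-|\bigcup_{i\in\cE}\cF_i|$ further nodes $\cV$ disjoint from $\{n\}\cup\bigcup_{i\in\cE}\cF_i$ (possible since $|\bigcup_{i\in\cE}\cF_i|\le k-1$ under the hypothesis $|\cE|\le\lfloor\frac{k-1-h}{h-c}\rfloor+1$), declare $\cU=\cV\cup\bigcup_{i\in\cE}\cF_i$ to be the systematic nodes and $\cP=[n]\setminus\cU$ the parity nodes, so that $n\in\cP$. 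Using that each $A_{p,v}$ is invertible together with 2) of~\cref{algebraic_setting}, the quantity $\dim(\bigcap_{i\in\cE}\langle S_{j\rightarrow\cF_i}\rangle)$ is independent of $j\in\cV\cup\cP$; in particular it equals $\dim(\bigcap_{i\in\cE}\langle S_{n\rightarrow\cF_i}\rangle)$, so it suffices to bound it for $j=p_1$.

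\textbf{Induction on $t=|\cE|$.} As in~\cref{generating_lemma_1}, I would first treat the case where $\cE=\{1,\dots,t\}$ is a block of consecutive integers with $\cU=[k]$, and then observe that the general case follows identically after relabelling. The base case $t=1$ gives $\dim\langle S_{p_1\rightarrow\cF_1}\rangle=\frac{h\ell}{r}=B_1\ell$ by 3) of~\cref{algebraic_setting}. For the inductive step, for each $p\in\cP$ I would fix a basis of $\bigcap_{i=1}^{t+1}\langle S_{p\rightarrow\cF_i}\rangle$, form the matrix $S_p$ from those rows, and write $S_p=T_{p,i}S_{p\rightarrow\cF_i}$ with $\rank(T_{p,i})=\dim\langle S_p\rangle$ (the rank being the same for all $p$ by the independence step). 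Then
\[\di(S_{p_1},\dots,S_{p_r})\,A(:,\cF_1)=\di(T_{p_1,i},\dots,T_{p_r,i})\,S_{\cP\rightarrow\cF_i}A(:,\cF_1),\qquad i\in[t+1].\]
Taking $i=1$ and using that $S_{\cP\rightarrow\cF_1}A(:,\cF_1)$ is invertible by 1) of~\cref{algebraic_setting} gives $\dim\langle\di(S_{p_1},\dots,S_{p_r})A(:,\cF_1)\rangle=r\,\dim\langle S_{p_1}\rangle$. For $2\le i\le t+1$ I would pass to the larger space $\langle S_{\cP\rightarrow\cF_i}A(:,\cF_1)\rangle$ and apply (\ref{key_lem_2}) of~\cref{key_lem} to split it as an $I_{|\cF_1\setminus(\cF_1\cap\cF_i)|}\otimes(\cdot)$-block plus a residual block whose column count is controlled by $|\cF_1\cap\cF_i|$. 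Intersecting over $i=2,\dots,t+1$, using (\ref{key_lem_eq1}) to rewrite the intersections in terms of the subspaces $\langle S_{u\rightarrow\cF_j}\rangle$ with $u\in\cF_1$, invoking the inductive hypothesis, and dividing by $r$, one obtains — with the same dichotomy $t\in[a-1]$ (all $\cF_j\cap\cF_1\neq\emptyset$) versus $t\ge a$ (where $\cF_j\cap\cF_1=\emptyset$ once $j\ge a+1$) as in~\cref{generating_lemma_1} — the bound $\dim\langle S_{p_1}\rangle\le B_{t+1}\ell$, which is exactly the defining recurrence~(\ref{generating_function_key}). Undoing the puncturing delivers the bound for general $d$.

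\textbf{Main obstacle.} The crux is the same bookkeeping as in~\cref{generating_lemma_1}: tracking precisely how each overlap $\cF_1\cap\cF_i$ contributes columns to the estimate in the presence of the block-diagonal matrices $S_{\cP\rightarrow\cF_i}$, and checking that the resulting inequality collapses to the recurrence defining $\{B_n\}$, including the case split governed by $a$, that is, by where the progression of windows $\cF_i$ stops meeting $\cF_1$. The only genuinely new ingredient beyond~\cref{generating_lemma_1} is the \emph{dimension independent of $j$} reduction borrowed from~\cref{lem_nonconstant1}, which reduces everything to a single parity node $p_1$ and transfers the final bound back to the distinguished node $n$; verifying that $n$ remains a parity node after the two reductions is routine.
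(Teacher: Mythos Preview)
Your proposal is correct and follows essentially the same route as the paper: reduce to $d=n-h$ by puncturing, choose $\cU\supseteq\bigcup_{i\in\cE}\cF_i$ so that $n\in\cP$, use the $A_{p,v}$-invertibility and 2) of \cref{algebraic_setting} to get the ``dimension independent of $j$'' fact, and then run the induction of \cref{generating_lemma_1} with $\di(S_{p_1},\dots,S_{p_r})$ in place of $I_r\otimes S$, invoking (\ref{key_lem_2}) for the overlap bookkeeping and the same $t\in[a-1]$ versus $t\ge a$ dichotomy. One cosmetic point: after (\ref{key_lem_2}) the top-left block is $S_{\cF_1\setminus(\cF_1\cap\cF_i)\rightarrow\cF_i}=\di(S_{u\rightarrow\cF_i}:u\in\cF_1\setminus(\cF_1\cap\cF_i))$, not an $I\otimes(\cdot)$ tensor, but you already handle this correctly when you pass to the subspaces $\langle S_{u\rightarrow\cF_j}\rangle$.
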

 \begin{proof}
 Write \(t=|\cE|\).
 First we choose a collection of $k-(t-1)(h-c)-h$ nodes $\cV$ from $[n-1] \setminus  \bigcup\limits_{i\in\cE} \cF_i $.
Now we regard the nodes in $\cU= \bigcup\limits_{i\in\cE} \cF_i \bigcup \cV$ as the systematic nodes,
and the other nodes $\cP=\{p_1,\ldots,p_r\}=[n]\backslash \cU$ the parity nodes. Note that \(n\in\cP\) and \([n] \setminus  \bigcup\limits_{i\in\cE}  \cF_i =\cV \cup \cP\).

For any \(v\in \cV\) and \(p\in \cP\), we have
   \begin{align}
       \dim\left(\mathop{\bigcap}\limits_{i\in\cE}  \langle S_{p \rightarrow \cF_i} \rangle\right)
       &=\dim\left( \mathop{\bigcap}\limits_{i\in\cE}  \langle S_{p \rightarrow \cF_i} A_{p,v}\rangle \right)\label{generating_function2_key_1}\\
       &=\dim\left(\mathop{\bigcap}\limits_{i\in\cE} \langle S_{v\rightarrow \cF_i}\rangle\right)\label{generating_function2_key_2}.
   \end{align}
Here (\ref{generating_function2_key_1}) follows from that \(A_{p,v}\) is invertible and (\ref{generating_function2_key_2}) follows from 2) of \cref{algebraic_setting}. 
Therefore the dimension of the subspaces $\mathop{\bigcap}\limits_{i\in\cE}  \langle S_{{j} \rightarrow \cF_i} \rangle$
are the same for all nodes $j\in\cV \cup \cP$.

Then we  prove the result  by induction on \(t\). 
 First, we  prove the lemma holds for $\cE$ when its  elements are consecutive. Without loss of generality, assume \(\cE = \{1, 2, \dots, t\}\). 

  Assume that \(d = n - h\) and prove the result by induction on \(t\). 
  For \( t = 1 \), the conclusion holds as \( \dim(\langle S_{n\rightarrow\cF_1} \rangle) = \frac{h\ell}{r} \).
  Suppose the result holds for \(t\), and  we will prove it also holds for \(t+1\).
  Just as in the proof of~\cref{lem_nonconstant1}, there exist  matrices \(S_p\) and \(T_{p,i}\) of the same rank such that \( \langle S_p \rangle =\bigcap\limits_{i=1}^{t+1}\langle S_{p\rightarrow\cF_{i}}\rangle\) and \(S_p=T_{p,i}S_{p\rightarrow\cF_i}\) for \(p\in\cP,i\in[t+1]\).
It follows that
\begin{align*}
   \begin{bmatrix}
        S_{p_1}&&\\
        &\ddots&\\
        &&S_{p_r}
    \end{bmatrix}A{(:,\cF_1)}&=\begin{bmatrix}
        T_{p_1,i}&&\\
        &\ddots&\\
        &&T_{p_r,i}
    \end{bmatrix}S_{P\rightarrow \cF_i}A{(:,\cF_1)},
\end{align*}
for $i\in[t+1]$. If $i=1$, then by 1) of \cref{algebraic_setting} we see that  the matrix \(S_{P\rightarrow \cF_1}A{(:,\cF_1)}\) is invertible. 
Then
\begin{align}\label{generating_function2_key_3}
\begin{split}
  \dim\left\langle 
  \begin{bmatrix}
        S_{p_1}&&\\
        &\ddots&\\
        &&S_{p_r}
    \end{bmatrix}A{(:,\cF_1)}
    \right\rangle
    &=\dim\left\langle
    \begin{bmatrix}
        T_{p_1,1}&&\\
        &\ddots&\\
        &&T_{p_r,1}
    \end{bmatrix}
    \right\rangle\\
    &=\sum_{i=1}^{r}\rank(T_{p_i,1})\\
    &=r\cdot \dim(\langle S_{n} \rangle)\\
    \end{split}
\end{align}
On the other hand, for \(2\leq i\leq t+1\) we have 
 \begin{align*}
     \left\langle \begin{bmatrix}
        S_{p_1}&&\\
        &\ddots&\\
        &&S_{p_r}
    \end{bmatrix}A{(:,\cF_1)}\right\rangle&\subseteq \left\langle S_{P\rightarrow \cF_i}A{(:,\cF_1)}\right\rangle,\\
    &\subseteq \left\langle\begin{bmatrix}
    S_{\cF_1\bl (\cF_1\cap\cF_i)\rightarrow\cF_i}&\bO\\[8pt]
    \bO&S_{P\rightarrow\cF_i}A(:,\cF_1\cap\cF_i)\\
\end{bmatrix}\right\rangle.
 \end{align*}
 Here the latter equality follows from  (\ref{key_lem_2}) of \cref{key_lem}. So, we can obtain
 \begin{align}\label{generating_function2_key_4}
 \begin{split}
     &\left\langle \begin{bmatrix}
        S_{p_1}&&\\
        &\ddots&\\
        &&S_{p_r}
\end{bmatrix}A{(:,\cF_1)}\right\rangle\subseteq\mathop{\bigcap}_{i=2}^{t+1}\left\langle\begin{bmatrix}
    S_{\cF_1\bl (\cF_1\cap\cF_i)\rightarrow\cF_i}&\bO\\[8pt]
    \bO&S_{P\rightarrow\cF_i}A(:,\cF_1\cap\cF_i)\\
\end{bmatrix}\right\rangle\\
 &\subseteq\mathop{\bigcap}_{i=2}^{t+1}\left(\left\langle\begin{bmatrix}
      S_{\cF_1\bl (\cF_1\cap\cF_i)\rightarrow\cF_i}&\bO
 \end{bmatrix}\right\rangle+
 \left\langle\begin{bmatrix}
     \bO&S_{P\rightarrow\cF_i}A(:,\cF_1\cap\cF_i)
 \end{bmatrix}\right\rangle\right).
 \end{split}
 \end{align}
 As shown in the proof of~\cref{generating_lemma_1}, we consider the following two cases.
 \begin{enumerate}
     \item If \(t\in [a-1]\), we have 
     \begin{align*}
     &\dim\left\langle \begin{bmatrix}
        S_{p_1}&&\\
        &\ddots&\\
        &&S_{p_r}
\end{bmatrix}A{(:,\cF_1)}\right\rangle \le (h-c)\sum_{i=2}^{t+1}\dim\left(\mathop{\bigcap}_{j=i}^{t+1} \langle
      S_{n\rightarrow\cF_j}\rangle\right)+(tc-(t-1)h)\ell.\\
 \end{align*}
 Combine (\ref{generating_function2_key_3}) and (\ref{generating_function2_key_4}), we can compute that
 \begin{align*}
     \dim(\langle S_n\rangle)&\le \frac{h-c}{r}\sum_{i=1}^{t}B_{i}
      \ell+\frac{tc-(t-1)h}{r}B_0\ell\\
      &=\left(\frac{h-c}{r}\sum_{i=1}^{t}B_{i}
      +\frac{tc-(t-1)h}{r}B_0\right)\ell\\
      &=B_{t+1}\ell,
 \end{align*}
 and the conclusion holds.
 \item If \(t\ge a\), we have 
    \begin{equation}\label{generating_function2_key_5}
    \dim\left\langle \begin{bmatrix}
        S_{p_1}&&\\
        &\ddots&\\
        &&S_{p_r}
\end{bmatrix}A{(:,\cF_1)}\right\rangle \le (h-c)\sum_{i=2}^{a}\dim\left(\mathop{\bigcap}_{j=i}^{t+1} \langle
      S_{n\rightarrow\cF_i}\rangle\right)+\left((a-1)c-(a-2)h)\mathop{\bigcap}_{i=a+1}^{t+1} \langle
      S_{n\rightarrow\cF_i}\rangle\right).
 \end{equation}
 Combine (\ref{generating_function2_key_3}) and (\ref{generating_function2_key_5}), we can compute that
 \begin{align*}
     \dim(\langle S_n\rangle)&\le \frac{h-c}{r}\sum_{i=t-a+2}^{t}B_{i}\ell
      +\frac{(a-1)c-(a-2)h}{r}B_{t+1-a}\ell,\\
      &=\left(\frac{h-c}{r}\sum_{i=t-a+2}^{t}B_{i}
      +\frac{(a-1)c-(a-2)h}{r}B_{t+1-a}\right)\ell\\
      &=B_{t+1}\ell.
 \end{align*}
 \end{enumerate}
 Thus, we have completed the proof for the case \(d = n - h\).
 For general \(d\), we puncture some $n-d-h$ nodes from $[n-1]\bl\bigcup\limits_{i\in \cE}\cF_i$, and get a new code $\C'$ with length $n'=d+h$, dimension \(k'=k\) and \(d=n'-h\).  
Applying the above argument to $\C'$ we get the conclusion.

For the case where \(\cE\) does not consist of \(t\) consecutive integers, the result can be proven in a similar manner, and we omit the details.

 \end{proof}

  \begin{proof}[Proof of \cref{generating_2}]
  If $\ell\geq \frac{\sqrt{\Delta}}{4}\alpha^{(\lfloor\frac{k-1-h}{h-c}\rfloor+1)},$ the conclusion holds.
  Hence we assume that  $\ell< \frac{\sqrt{\Delta}}{4}\alpha^{(\lfloor\frac{k-1-h}{h-c}\rfloor+1)}$.
  Let \(\{e_j : j \in [\ell]\}\) be the standard basis of \(\F_q^\ell\). 
  Set \(\cF_i=\{(i-1)(h-c)+1,\ldots,(i-1)(h-c)+h\}\)for \(1\le j\le\lfloor\frac{n-1-h}{h-c}\rfloor+1\).
  Define a bipartite graph where one set of vertices is the standard basis vectors 
\(\{e_j : j \in [\ell]\}\), and the other set of vertices is the  repair 
subspaces $\langle S_{n\rightarrow\cF_i}\rangle ,1\le i\le\left(\lfloor\frac{n-1-h}{h-c}\rfloor+1\right)$.  
An edge exists between a vertex \(e_j\) and a subspace \(\langle S_{n\rightarrow\cF_i}\rangle\) if and only if \(e_j\in \langle S_{n\rightarrow\cF_i}\rangle\). 

Count the total number of edges in this bipartite graph using two different methods.
 Similar to the proof of \cref{thm_constant1}, there are \(\left(\lfloor\frac{n-1-h}{h-c}\rfloor+1\right)\cdot\frac{h\ell}{d-k+h}\) edges in the graph.
  We next prove that for any \(\cE\subseteq [\lfloor\frac{n-1-h}{h-c}\rfloor+1]\),
  \begin{equation}\label{proof_generating_function_2}
        \dim(\bigcap\limits_{i\in \cE}\langle S_{n\rightarrow\cF_i}\rangle)\leq \frac{4}{\sqrt\Delta}\frac{1}{\alpha^n}\ell.
    \end{equation}
  By \((\ref{a_2b_n})\), we have 
 \begin{align}
     B_n\le\frac{4}{\sqrt\Delta}\frac{1}{\alpha^n}.
  \end{align}
  For \(\left(\lfloor\frac{k-h-1}{h-c}\rfloor+1\right)<|\cE|\le \left(\lfloor\frac{n-1-h}{h-c}\rfloor+1\right)\), we choose any subset $\mathcal{H}$ of $\cE$ with size $\lfloor\frac{k-h-1}{h-c}\rfloor+1$. By~\cref{generating_lemma_2},we obtain
    \begin{align*}
       \dim\left(\bigcap_{i\in\cE} \langle S_{n\rightarrow\cF_i} \rangle\right)
       &\leq \dim\left(\bigcap_{ i\in\mathcal{H}}\langle S_{n\rightarrow\cF_i} \rangle\right) \\
       &\le\frac{4}{\sqrt\Delta}\frac{1}{\alpha^{(\lfloor\frac{k-1-h}{h-c}\rfloor+1)}}\ell\\
       &<1.\\
    \end{align*}
  Combining this with~\cref{generating_lemma_2}, we conclude that (\ref{proof_generating_function_2}) holds for
  \(\cE\subseteq [\lfloor\frac{n-1-h}{h-c}\rfloor+1]\).
  Just as (\ref{general_generating_1}) and (\ref{general_generating_2}), there are at most  \(\ell\left\lfloor\log_{\alpha}\left(\frac{4}{\sqrt\Delta}\ell\right)\right\rfloor\) edges in the graph.
   We have
    \begin{align*}
     \left(\left\lfloor\frac{n-1-h}{h-c}\right\rfloor+1\right)\cdot\frac{h\ell}{d-k+h}\le\ell\cdot\left\lfloor\log_{\alpha}\left(\frac{4}{\sqrt\Delta}\ell\right)\right\rfloor,
    \end{align*}
   and 
    \begin{align*}
    \ell\ge\frac{\sqrt{\Delta}}{4}\alpha^{\left\lceil\left(\lfloor\frac{n-1-h}{h-c}\rfloor+1\right)\cdot\frac{h}{d-k+h}\right\rceil}.
    \end{align*}
   In summary we have
     \[l\ge\min\left\{ \frac{\sqrt{\Delta}}{4}\alpha^{(\lfloor\frac{k-1-h}{h-c}\rfloor+1)},\frac{\sqrt{\Delta}}{4}\alpha^{\left\lceil\left(\lfloor\frac{n-1-h}{h-c}\rfloor+1\right)\cdot\frac{h}{d-k+h}\right\rceil}\right\}.\]
\end{proof}











\section{Conclusion}\label{sec:conclusion}
 In this paper, we derive lower bounds on the sub-packetization of optimal-access MSR codes in the scenario of multiple-node failures. 
 Besides, we offer a more detailed analysis by employing generating functions. 
 Below, we present a comparison of these bounds.
 By setting $c=h/2$ in \cref{generating_1} and \cref{generating_2}, we can compute that
\[
\alpha = \frac{-\frac{h}{2} + \sqrt{\frac{h^2}{4} + 2h(d-k+h)}}{h},
\]
which leads to the bound
\[
\ell \geq \Theta\left(\alpha^{\frac{2n}{d-k+h}}\right).
\]
On the other hand, the bounds in \cref{thm_constant2} and \cref{thm_nonconstant2} give
\[
\ell \geq \Theta\left(\left(\frac{d-k+h}{h}\right)^{\frac{n}{d-k+h}}\right).
\]
It is not hard to check that  \(\alpha^2 > \frac{d-k+h}{h}\).
Hence, for high-rate MSR codes and sufficiently large \(n\), the bounds in Theorems~\ref{generating_1} and~\ref{generating_2} are greater than those in Theorems~\ref{thm_constant2} and~\ref{thm_nonconstant2}. Therefore, unlike the case when \(h = 1\), the generalized bounds in Theorems~\ref{thm_constant2} and~\ref{thm_nonconstant2} are generally not tight.

\appendices
\section{The Cut-Set Bound on Centralized MSR Codes With Code Rate No Less Than 1/2}\label{proof_cutset}
\begin{proof}
(The proof method is proposed by Alrabiah and Guruswami in \cite{Guruswami2019})
    Let \(\C\) be an \((n,k,\ell)\) MDS array code over \(\F_q\). 
    We write a codeword of \(\C\) as \(C_1,\ldots,C_n\). 
    Give two integers \( h,d\) such that \(2\le h\le n-k\) and \(k+1\le d\le n-h\).  
    For any set of \(h\) failed nodes \(\mathcal{F} \subseteq [n]\) and any set of \(d\) helper nodes \(\mathcal{R} \subseteq [n] \setminus \mathcal{F}\), the code \(\mathcal{C}\) will repair \(\mathcal{F}\) by having each \(C_j\) for \(j \in \mathcal{R}\) transmit \(\beta_{j, \mathcal{F}}\) bits to the single data center.
   Since code rate \(\frac{k}{n}\ge\frac{1}{2}\), \(k-h\ge k-(n-k)\ge 0\). By  MDS property, any subset \( M \subseteq \mathcal{R} \) of size \( k - h \), together with \( h \) failed nodes in \( \mathcal{F} \), can reconstruct the entire codeword. Thus
\begin{equation}
    \sum\limits_{i\in M}\ell+\sum\limits_{j\in\cR\bl M} \beta_{j,\cF} \ge k\ell.
\end{equation}
Namely,
\begin{equation}\label{single_sum}
    \sum\limits_{j\in\cR\bl M} \beta_{j,\cF} \ge h\ell.
\end{equation}
By summing over all possible \(M\), we get
\begin{align}
    \sum\limits_{\substack{M\subseteq \cR,\\|M|=k-h}}\sum\limits_{j\in\cR\bl M} \beta_{j,\cF} \ge \sum\limits_{\substack{M\subseteq \cR,\\|M|=k-h}}h\ell,\\
     \sum\limits_{j\in\cR} \beta_{j,\cF} \ge \frac{dhl}{d-k+h}\label{multisum}.
\end{align}
Furthermore, to achieve equality, it is necessary for equality to hold in (\ref{single_sum}) for all \(M\). It is not difficult to verify that \(\beta_{j, \mathcal{F}} = \frac{hl}{d - k + h}\) for all \(j \in \mathcal{R}\).
\end{proof}

\bibliographystyle{ieeetr}
\bibliography{ref}
\end{document}